\journal{Journal of Logical and Algebraic Methods in Programming}
\newcommand{\xykarrow}{ {\SelectTips{cm}{}\scalebox{0.75}{\object@{|}} } }
\newcommand{\pullbackcorner}[1][dr]{
  \save*!/#1-1.2pc/#1:(-1,1)@^{|-}\restore
}
\newcommand{\karrow}{\mathrel{\mathmakebox[\widthof{$\xrightarrow{\rule{1.45ex}{0ex}}$}]{\xrightarrow{\rule{1.45ex}{0ex}}\hspace*{-2.4ex}{\mapstochar}\hspace*{1.8ex}}}}
\newcommand{\eqdef}{\widehat{=} \> \>}
\newcommand{\conc}{\> {\small \texttt{+} \hspace{-0.1cm} \texttt{+}} \>}
\newcommand{\conce}{\emph{ {\small \texttt{+} \hspace{-0.315cm} \texttt{+}}}}
\newcommand{\longerarrow}{\xrightarrow{\hspace*{1cm}}}
\newcommand{\klH}{\topo_{\MH}}
\newcommand{\topo}{{\bf Top}}
\def\justl#1#2{\\
        &#1& \rule{2em}{0pt}  \{
        \mbox{  \rule[-.7em]{0pt}{1.8em} {\footnotesize #2} \} } \\ && }
\newcommand{\MH}{\mathcal{H}}
\newcommand{\Rz}{\textsf{T}}
\newcommand{\Dur}{\textsf{D}}
\renewcommand{\boxed}[1]{\text{\fboxsep=.1em\fbox{\m@th$\displaystyle#1$}}}
\def\psync#1{\llparenthesis #1 \rrparenthesis}
\def\comp{\mathbin{\boldsymbol{\cdot}}}
\def\kcomp{ \> \bullet \> }
\def\pv#1#2{\langle #1 \rangle #2}
\def\hmul#1#2#3 { #1 \> \lhd \> #2 \> \rhd \> #3 }
\DeclareMathOperator{\img}{img}
\declaretheorem[style=mystyle,name=Example]{Mexample}
\declaretheorem[style=mystyle,name=Definition]{Mdefinition}
\declaretheorem[style=mystyle,name=Theorem]{Mtheorem}
\declaretheorem[style=mystyle,name=Lemma]{Mlemma}
\declaretheorem[style=mystyle,name=Corollary]{Mcorollary}
\def\const#1{\underline{#1}}
\def\typ#1{\stackrel{#1}{\longerarrow}}
\def\cond#1#2#3 { #1 \> \lhd \> #2 \> \rhd \> #3 }
\def\corn#1 { \widehat{#1}   }
\def\pulb#1{\pv{\pv{#1}}}
\newcommand{\ie}{\emph{i.e.}}
\begin{document}

\begin{frontmatter}



\allowdisplaybreaks[3]

\title{Continuity as a computational effect}

\author[1]{Renato Neves}
\author[1]{Luis S. Barbosa}
\author[2]{Dirk Hofmann}
\author[2]{Manuel A. Martins}
\address[1]{ INESC TEC (HASLab) \& Universidade do Minho, Portugal\\
  \texttt{rjneves@inescporto.pt,lsb@di.uminho.pt}}
\address[2]{CIDMA - Dep. of Mathematics, Universidade de Aveiro, Portugal \\
  \texttt{\{dirk,martins\}@ua.pt}}

\begin{abstract}

The original purpose of component-based development was to provide
techniques to master complex software, through composition, reuse and
parametrisation. However, such systems are rapidly moving towards a
level in which software becomes prevalently intertwined with
(continuous) physical processes. A possible way to accommodate the
latter in component calculi relies on a suitable encoding of
continuous behaviour as (yet another) computational effect.

This paper introduces such an encoding through a monad which, in the
compositional development of hybrid systems, may play a role
similar to the one played by $1+$, powerset, and distribution
monads in the characterisation of partial, nondeterministic and
probabilistic components, respectively. This monad and its Kleisli
category provide a universe in which the effects of continuity over
(different forms of) composition can be suitably studied.

\end{abstract}

\begin{keyword}
Monads \sep 
components \sep 
hybrid systems \sep
control theory
\end{keyword}

\end{frontmatter}

\section{Introduction}

\subsection{Motivation and objectives.}

Component-based software development is often explained through a
visual metaphor: a palette of computational units, and a blank canvas
in which they are dropped and interconnected by drawing wires
abstracting different composition and synchronisation mechanisms.
More and more, however, components are not limited to traditional
information processing units, but encapsulate some form of interaction
with physical processes.  The resulting systems, referred to as
\emph{hybrid} \cite{tabuada09,alur2015}, exhibit a complex dynamics in which
computations, coordination, and physical processes interact, become
mutually constrained, and cooperate to achieve specific goals.

One generic way of looking at components, proposed in
\cite{Barbosa03}, emphasises an \emph{observational} semantics,
through a signature of observers and methods, that makes them
amenable to a \emph{coalgebraic} characterisation as (generalisations
of) abstract Mealy machines. The resulting calculus is parametric on
whatever behavioural model underlies a component specification. This
captures, for example, partial, nondeterministic or probabilistic
behaviour of a component's dynamics by encoding such behavioural
effects as \emph{strong monads} \cite{kock:1972} --- a pervasive
mathematical structure with surprising applications in different areas
of Computer Science (see \emph{e.g.}, 
\cite{moggi:1991,wadler:1995,Marlow11,Doberkat:2009,Hasuo06}).

Indeed, each monad captures a specific type of behaviour, which is
then reflected in the corresponding component calculus.  For example,
\emph{maybe} monad $(1+)$ introduces \emph{partial} components;
the \emph{powerset} $(\mathcal{P})$ monad \emph{nondeterministic}
ones; and \emph{distribution} monad $(\mathcal{D})$ brings
(discrete) \emph{probabilistic} evolution into the scene.  Can
\emph{continuous behaviour}, prevalent in hybrid systems and control
theory, be encoded in a similar way, as (yet another) computational
effect? Such is the question addressed in this paper.

Monads first came in contact to Computer Science in the 80's, when
E. Moggi proposed their use to structure the denotational semantics of
programming languages \cite{moggi:1989,moggi:1991}.
Later the concept was introduced in programming practice by P. Wadler
\cite{wadler:1995}, leading to a rigorous style of combining purely
functional programs that mimic impure (side-)effects.  The key idea is
that monads encode in abstract terms several kinds of computational
effects, such as exceptions, state updating, nondeterminism or
continuations. Such effects are represented by a type constructor
$\mathcal{T}$ (an endofunctor over a suitable category) so that
computations producing values of type $O$ are regarded as terms of
type $\mathcal{T} O$. In this way \emph{values} and
\emph{computations} are explicitly distinguished and programs can be
thought of as arrows $I \rightarrow \mathcal{T} O$ representing the
computation of values of type $O$ from values of type $I$, while
producing some effect described by $\mathcal{T}$. Or, putting it in a
different way, output values are encapsulated (or embedded) in the
effect specified by $\mathcal{T}$.  A monad comes equipped with an
identity and an associative multiplication which, from a computational
point of view, builds a (trivial) computation from a value, and
flattens nested effects, respectively. Furthermore,
if $\mathcal{T}$ is \emph{strong} \cite{wadler:1995} additional
machinery is available to distribute the computations' effect over
context.  The monad structure allows program composition by handling
the underlying computational effect through functor $\mathcal{T}$ and
the flattening operation. Actually, each monad gives rise to a so
called Kleisli category in which one may study the effects of the
behavioural type (as specified by the monad) over different forms of
composition; ultimately, this leads to rich component calculi (as
discussed in \cite{Barbosa03}).

The current paper introduces a (strong) monad $\mathcal{H}$ that
subsumes the typical continuous behaviour of dynamical, and hybrid
systems. Intuitively, the type effect of $\MH$
(\ie, the underlying endofunctor) represents the (continuous)
\emph{evolution} over time of some value in $O$; the identity
defines a trivial evolution (\ie, with duration zero), and the
flattening operation allows the control of an evolution to be passed
along different systems.

Moreover, the paper explores the corresponding Kleisli category as the
mathematical space in which the underlying (continuous) behaviour can
be isolated and its effect over different forms of composition
suitably studied. As we will see in the sequel,
such a category gives rise to several forms of composition operators
(\emph{e.g.},  sequential, parallel execution), \emph{wiring}
mechanisms, and \emph{synchronisation} techniques.
Again this parallels the role that the categories of partial
functions, relations and stochastic matrices have as reasoning
universes for component composition under the behavioural model
provided, respectively, by monads $1+$, $\mathcal{P}$ and
$\mathcal{D}$ \cite{BO06,JNO14}.  Similarly, this work paves the way
to the development of a coalgebraic calculus of \emph{hybrid
components} in the spirit of \cite{Barbosa03}.

\subsection{A tribute to Jos\'{e} Nuno Oliveira.}

The idea of regarding \emph{continuity} as a computational effect, or
more rigorously, a \emph{physical} one, entailing a suitable
notion of composition and a reasoning universe, in the form of a
Kleisli category, owes much to the way Jos\'{e} helped us to approach
computational phenomena.

Building on the role of monads in functional programming and program
calculi, as monadic inductive and coinductive schemes
\cite{mj95,par98}, Jos\'{e} introduced us to monads both as a powerful
structuring mechanism and a source of equally powerful genericity.  An
obsession for patterns and a sharp intuition for generic, conceptually
reusable structures remain, after all, the hallmark of his
illuminating, socratic teaching.

In the late 90's, Jos\'{e} supervised the PhD work of the second
author on the coalgebraic calculus of state-based components 
mentioned above \cite{Barbosa03}. This emerged from the conjunction
of two key ideas; first, that a `black-box' characterisation of
software components favoured an \emph{observational}, essentially
coalgebraic, semantics; second, that the envisaged calculus had to
be \emph{generic}, in the sense that it should not depend on a
particular notion of component behaviour.  Monads, actually strong
monads, were quickly identified as a source of such a genericity, the
whole work boiling down to a calculus of \emph{monadic} Mealy
machines. 
Software components were thus studied as coalgebras
(in a suitable category) typed as
\begin{flalign*}
  S \longrightarrow \mathcal{T}(S \times O)^I
\end{flalign*}
where $S$ represents the (internal) state space, and $I$, $O$ are
respectively the input and output spaces. $\mathcal{T}$ is a strong
monad that captures the intended behavioural effect.

Being generic entailed the need for an equally generic reasoning
framework. By then, the adoption of a \emph{pointfree}, essentially
equational, calculational proof style, thus avoiding the somehow more
standard coinductive proofs through the explicit construction of
bisimulations, was understood as the price to be paid for genericity,
as component laws were to be verified without fixing the working monad
completely. Generic proofs performed in this style are clear and easy
to follow, even if often long due to the systematic recording of
almost all elementary steps.

For Jos\'{e}, however, the way proofs are written is not a
technicality. Proofs, as he taught us every day, are basically honest
explanations, bearing evidence in a fixed formal context, and
therefore must be conveyed in a crisp, clear, easily reproducible
style, letting the underlying structure to emerge and helping to build
the correct intuitions. Years later, in the context of a joint
research project \cite{FMBB09}, Jos\'{e} championed the use of
calculational, pointfree reasoning as a way of reinvigorating the role
of proof in elementary mathematical education.  The pointfree style
adopted in many proofs of this paper is also intended as a tribute to
this view.

For Jos\'{e} being generic does not mean to seek refuge in
some sort of formal ivory tower, of stylised constructions polished
ahead of any meaningful intuition. This explains why, being a devoted
functional programmer, who resorts to Haskell as a pocket calculator,
Jos\'{e} soon started to focus his attention on the rich universes of
specific monadic computations --- their Kleisli categories. If pure
functions are computations for the identity monad, relations and
matrices play a similar role in such richer contexts.  To be added, of
course, and in a very concrete way, to the relevant calculator. His
systematic, calculational, `syntax-driven' work on relation algebra
\cite{oliveira09,MO12a}, as a framework for nondeterministic
computations, and linear algebra \cite{Ol12a,MO12}, for probabilistic
ones, was responsible for a fresh understanding of the Kleisli
categories of two fundamental monads, and lead to a number of new
results and simpler, elegant renderings of old ones. Having introduced
a monad for continuity, this paper initiates the unravelling of the
corresponding Kleisli category, as the reasoning universe for
continuous processes, thus, and once again, pursuing a path Jos\'{e}
will certainly cheer.

\subsection{Document structure}

After a brief detour on preliminaries and notation in Section
\ref{sec:pre}, the \emph{continuous evolution} monad ($\mathcal{H}$)
is introduced in Section \ref{sec:mon}. In Section \ref{sec:kl}, we
explore the corresponding Kleisli category: as we will see, its arrows
define \emph{continuous systems} $I \rightarrow \MH O$ (technically,
preliminary versions of dynamical, and hybrid systems) and (Kleisli)
composition makes possible for a component to execute after another,
starting its evolution when the preceding one finishes its own. In
Section \ref{sec:Adj}, we take advantage of the so called Kleisli
adjunction to define wiring mechanisms and characterise
(co)limits. The latter give rise to new forms of component composition
and corresponding laws. In order to add synchronisation techniques to
our (monadic) framework, Section \ref{sec:Sync} provides extra
structure to the underlying functor of monad $\MH$. After this we
suggest a feedback operator.  In Section \ref{sec:Str}, we show that
monad $\MH$ is strong; this brings us closer to hybrid systems as
coalgebraic components (in the spirit of \cite{Barbosa03}) whose
behavioural effect is captured by $\MH$. Formally, coalgebras typed as
\begin{flalign*} 
  S \longrightarrow \MH (S \times O)^I.
\end{flalign*} 
Finally, Section \ref{sec:con} discusses related work, provides
possible research directions, and presents concluding remarks. 

In order to illustrate the developments of the ensuing sections, a
number of classical examples of continuous and hybrid systems will be
explored under the light of the framework reported in this paper.



\section{Preliminaries}
\label{sec:pre}

\subsection{Continuous systems}

Technically, we qualify as \emph{continuous} a system whose output,
for any given input, is a (continuous) evolution over time;
\emph{i.e.}, an arrow typed as
\begin{flalign*} I \longrightarrow \coprod \limits_{d \in [0,\infty]}
O^{T_d}
\end{flalign*} where $I$, $O$ are, respectively, input and output
spaces, $O^{T_d}$ the space of continuous functions $T_d \rightarrow
O$ (the \emph{evolutions}), and $T_d$ stands for $\{ r \in
\mathbb{R}_{\geq 0} \> | \> r \leq d \}$.  Actually, this definition
includes the family of \emph{continuous dynamical systems} that
interpret the non-negative reals (\ie, $\mathbb{R}_{\geq 0}$, here
denoted by letter $\Rz$) as a time domain (\emph{cf.} \cite{introdyn,
  introdyn2}).  Formally, the latter are characterised as functions,

\begin{center}
    $\infer={\lambda \Phi : X \rightarrow X^{\Rz}}
    {\Phi : X \times \Rz \rightarrow X  }$  
\end{center}

\noindent
such that for any $t \in \Rz$, $x \in X$
\begin{flalign}
  & \Phi \> (x, 0) = x \label{predyn} \\
  & \Phi \> (x, t_1 + t_2) = \Phi (\Phi(x,t_1), \> t_2)
\end{flalign}

\noindent
From a monadic perspective, continuous dynamical systems (in the
form $\lambda \Phi : X \rightarrow X^{\Rz}$) may be
seen as programs whose behavioural effect subsumes some form of
continuous evolution over time. Indeed, as we will see later in the
paper, such systems are part of a broader family of arrows that live
in the Kleisli category of monad $\MH$ ($\klH$). In general, 
law \ref{predyn} will be an important part in the characterisation of
Kleisli composition.
We will also see that the traditional view of hybrid systems --
as a family of dynamical (or continuous) systems indexed by a (discrete)
state space -- coincides with ours; and, moreover, that such systems also
live in $\klH$ (due to the machinery that makes $\MH$ strong).

\subsection{Notation}

The key role that continuity takes in this work, suggests the category
$\topo$ of topological spaces and continuous functions as a suitable
working environment for developing the envisaged results. 

In the sequel, whenever the context is clear, a topological space will
be denoted by its underlying set. Topological spaces $X \times Y$, $X
+ Y$ correspond to the canonical product and coproduct of $X,Y$,
respectively. Also, for any $X \subseteq Y$, assume that $X$ has the
subspace topology induced by $Y$.  Finally, whenever $Y$ is
\emph{core-compact} (\emph{cf.} \cite{escardo2001}), space $X^Y$ has
the exponential topology.

Category $\topo$ is (co)complete; this allows to take advantage of
isomorphisms $\alpha : (X \times Y) \times Z \cong X \times (Y \times
Z)$, and $sw : X \times Y \cong Y \times X$. $\topo$ also provides a
set of useful rules for showing continuity; Figure \ref{fig:cont} sums
up the ones used in the paper.  In rule $( \> \lambda \> )$, $Y$ must
be core-compact so that the evaluation function $ev : X^Y \times Y
\rightarrow X$ is well defined (\emph{cf.} \cite{escardo2001}).

\begin{figure}
\begin{center}
\begin{tabular}{c  c}

  $\infer[(\> \comp \> )]{g \comp f : X \rightarrow Z}
    {f : X \rightarrow Y, g : Y \rightarrow Z}$  

  & 

  $\infer=[( \> \lambda \> )]{ \lambda f : X \rightarrow Z^{Y}}
    {f : X \times Y \rightarrow Z}$  
    
  \\ & \\

  $\infer=[( \> \times \> ) ]{\pv{f,g} : X \rightarrow Y_1 \times Y_2}
  {f : X \rightarrow Y_1, g : X \rightarrow Y_2}$

  &
  
  $\infer=[( \>  + \> ) ]{[f,g] : X_1 + X_2 \rightarrow Y}
  {f : X_1 \rightarrow Y, g : X_2 \rightarrow Y}$

  \\ & \\ 

   $\infer[( \> \downarrow_l \>  )]{f_A : A \rightarrow Y}
   {f : X \rightarrow Y, A \subseteq X}$

   &

   $\infer[( \> \downarrow_r \> )]{f^B : X \rightarrow B}
   {f : X \rightarrow Y, \img \> f \subseteq B}$ \\
     with $f_A \> = f \comp \iota \> $ 
     (for $\iota : A \hookrightarrow X$) & 
     with $\iota \comp f^B = f $ 
     (for $\iota : B \hookrightarrow Y$)\\ & 

\end{tabular}

\noindent 
\end{center}
\caption{Continuity rules in $\topo$.}
\label{fig:cont}
\end{figure}

Universal arrows $X \rightarrow 1$ to the final object in
$\topo$ are denoted by $!$, and a function constantly
yielding a value $x$ by $\underline{x}$. Given two functions $f,g : X
\rightarrow Y$, and a predicate $p$, we introduce a conditional
expression $f \> \lhd \> p \> \rhd \> g : X \rightarrow Y$, defined by,
\begin{flalign*}
& (\hmul{f}{p}{g}) \> x \> \> \eqdef
  \begin{cases}
    f \> x & \mbox{ if } p \> x \\
    g \> x & \mbox{ otherwise }
  \end{cases}
\end{flalign*}

\noindent
Whenever found relevant, and no ambiguities arise, we will denote 
expression $(\hmul{f}{p}{g}) \> x$ by $(\hmul{f \> x }{p \> x}{g \> x})$.
The continuous functions \emph{minimum} $\curlywedge : \Rz \times
[0,\infty] \rightarrow \Rz$ and \emph{truncated subtraction}
$\circleddash : \Rz \times [0,\infty] \rightarrow \Rz$ play a key role
in some proofs.  They are defined by the following equations
\begin{flalign*}
   & \curlywedge \> \eqdef 
   \hmul{\pi_1}{(\leq)}{\pi_2} \\
   & \circleddash  \> \eqdef \hmul{(-)}{(>)}{\underline{0}}  
\end{flalign*}
where $\leq,>$ are the usual ordering relations over the reals with
infinity.

As usual, functions $\pi_1 : X \times Y \rightarrow X$,
$\pi_2 : X \times Y \rightarrow Y$ correspond to the projections
associated with any binary product, and $i_1 : X \rightarrow X + Y$,
$i_2 : Y \rightarrow X + Y$ the coprojections associated with any
binary coproduct. Moreover, symbol $\star$ is used to denote the
element of a singleton set, and $| \mathbf{C} |$ to represent the
class of objects of a category $\mathbf{C}$. Finally, to avoid a
burdened notation, we will often drop the subscript in a
component of a natural transformation.

\section{The continuous evolution monad}
\label{sec:mon}

\noindent
As mentioned above, we regard continuous systems as arrows of type
\begin{flalign*} I \longrightarrow \coprod \limits_{d \in [0,\infty]}
O^{T_d}.
\end{flalign*} 

\noindent
In order to define them in $\topo$, we need to equip the target object
with a suitable topology.  A first choice would be the coproduct
topology (as suggested by the expression above), but this is not
suitable, since in many cases such a topology forbids the
system to change the duration of its evolutions along different
inputs. 

Let us thus explore an alternative topology; the strategy will be
similar to the one used in the definition of a \emph{Moore path
  category} where, given a topological space $X$, arrows are paths
(\ie, evolutions) $[0,d] \rightarrow X$ and composition corresponds to
the concatenation of those paths (\emph{cf.} \cite{ronnie2009}).
Actually, the flattening operation of monad $\MH$, discussed below,
can be seen as \emph{a more general version} of path concatenation.

Consider, with no loss of generality, that all evolutions have
domain $\Rz$. Such is possible when one notices that $T_d$ (for some
$d \in [0,\infty]$) is a \emph{retract} of $\Rz$ through the
\emph{truncation} function (the retraction)
\begin{flalign*} 
  \curlywedge_d : \Rz \longrightarrow T_d
\end{flalign*} 
\noindent
$\curlywedge_d \> \eqdef \hmul{id}{(\leq_d)}{\const{d}}$ and considers
just those functions $f \in O^{\Rz}$ that become constant after time
instant $d$, \ie $f \comp \curlywedge_d = f$. This gives a family of
bijections $\{ f \in O^{\Rz} \> | \> f \comp \curlywedge_d = f \}
\cong O^{T_d}$ indexed by durations $d \in [0,\infty]$.  Continuous
systems thus become arrows typed as,
\begin{flalign*} 
  I \longrightarrow \{ \> (f,d) \in O^{\Rz} \times [0,\infty] \> | \>
  f \comp \curlywedge_d = f \}
\end{flalign*}
where the target object comes equipped with the canonical topology.
This leads to the following definition for the underlying
functor of monad $\MH$.
\begin{Mdefinition}
$\MH : \topo \rightarrow \topo$ is a mapping such that
for any objects $X,Y \in |\topo|$ and any continuous
function $g : X \rightarrow Y$,
\begin{eqnarray*}
  & \MH X  & \eqdef \{ \> (f,d) \in X^{\emph{\Rz}} \times \emph{\Dur} \> | \>
        f \comp \curlywedge_d = f  \> \} \\
  & \MH g  & \eqdef g^{\emph{\Rz}} \times \> id  
\end{eqnarray*}
where $\emph{\Dur} = [0,\infty]$ is the one-point compactification of
$\mathbb{R}_{\geq 0}$ $($\emph{cf.} \emph{\cite{JGL-topology}}$)$, and
$g^{\emph{\Rz}} \> f = g \comp f$.
\end{Mdefinition}

\begin{Mtheorem}
  $\MH$ is a functor.
\end{Mtheorem}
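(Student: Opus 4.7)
The plan is to verify three things: (i) $\MH g$ is well-defined as a continuous function from $\MH X$ to $\MH Y$, (ii) $\MH$ preserves identities, and (iii) $\MH$ preserves composition. Identity and composition preservation reduce to equalities on the auxiliary space $X^{\Rz} \times \Dur$, so the only real content lies in the well-definedness step.

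First I would check that $\MH g$ restricts properly. Given $(f,d) \in \MH X$, we have $f \comp \curlywedge_d = f$, hence
\begin{flalign*}
(g \comp f) \comp \curlywedge_d \; = \; g \comp (f \comp \curlywedge_d) \; = \; g \comp f,
\end{flalign*}
so $(g \comp f, d) \in \MH Y$, showing that the image of $\MH g$ lies in $\MH Y$. For continuity, I would first argue that $g^{\Rz} : X^{\Rz} \to Y^{\Rz}$ is continuous: since $\Rz$ is locally compact Hausdorff and hence core-compact, $g^{\Rz}$ is the exponential transpose of $g \comp ev : X^{\Rz} \times \Rz \to Y$, which is continuous by rule $(\comp)$, so rule $(\lambda)$ gives continuity of $g^{\Rz}$. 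Then rule $(\times)$ yields continuity of $g^{\Rz} \times id : X^{\Rz} \times \Dur \to Y^{\Rz} \times \Dur$. Since $\MH X$ carries the subspace topology from $X^{\Rz} \times \Dur$ and the image of its restriction lies in $\MH Y$, applying rules $(\downarrow_l)$ and $(\downarrow_r)$ yields continuity of $\MH g : \MH X \to \MH Y$.

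For identity preservation, $\MH(id_X) = id_X^{\Rz} \times id_{\Dur} = id_{X^{\Rz}} \times id_{\Dur}$, which restricts to $id_{\MH X}$. For composition, given $h : X \to Y$ and $g : Y \to Z$, one computes
\begin{flalign*}
\MH(g \comp h) \; = \; (g \comp h)^{\Rz} \times id \; = \; (g^{\Rz} \comp h^{\Rz}) \times id \; = \; (g^{\Rz} \times id) \comp (h^{\Rz} \times id) \; = \; \MH g \comp \MH h,
\end{flalign*}
using functoriality of $(-)^{\Rz}$ and bifunctoriality of $\times$.

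The only mildly delicate step is justifying continuity of $g^{\Rz}$, which rests on the core-compactness of $\Rz$ needed to make the exponential $X^{\Rz}$ well-behaved; once this is granted, the rest is routine. I would therefore present the well-definedness check in full and relegate identity and composition preservation to a single display of equations.
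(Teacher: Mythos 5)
Your proof is correct and follows essentially the same route as the paper's: $\MH g$ is the (co)restriction of $g^{\Rz} \times id$, continuity follows from the continuity rules, and preservation of composition reduces to functoriality of $(\,\_\,)^{\Rz}$ and $(\,\_\, \times \Dur)$. You are in fact slightly more thorough than the paper, which leaves implicit both the well-definedness check $(g \comp f) \comp \curlywedge_d = g \comp f$ and the preservation of identities.
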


\begin{proof}
  We need to show that for any continuous
  functions $g : X \rightarrow Y, h : Y \rightarrow Z$,
  $\MH g : \MH X \rightarrow \MH Y$ is continuous, and
  $\MH ( h \comp g ) = \MH h \comp \MH g$
  
  \noindent
  Since $\MH g = g^{\Rz} \times id$, and $g$ is continuous, then $\MH g$
  must be as well. Distributivity of composition follows from
  property
  \begin{flalign*}
    \iota \comp \MH g =  \big ( ( \> \_ \> \times \Dur) \comp (
     \> \_ \> )^{\Rz} \> g  \big ) \comp \iota
  \end{flalign*}
  \noindent
  where $\iota$ is the inclusion map $\MH X \hookrightarrow (X^{\Rz}
  \times \Dur)$, $(\> \_ \> \times \Dur)$ is the ($\Dur$) product
  functor, and $( \> \_ \> )^{\Rz}$ the ($\Rz$) exponential functor.
\end{proof}
\noindent
Let us explore some examples of continuous systems characterised 
as arrows $I \rightarrow \MH O$.

\begin{Mexample}
  \label{exsig}
  \emph{Signal generators} are classical examples of continuous
  systems that can generate sinusoidal waves as output. They can be
  regarded as arrows $s : \mathbb{R} \rightarrow \MH \mathbb{R}$ such
  that $s \> r \> \eqdef (r + (sin \> \_ \>), \> \infty)$.
\end{Mexample}
\noindent
Note that, in contrast to the coproduct topology (in the target
object), the topology chosen for $\MH$ allows durations to change, and
thus captures a wider range of behaviours. For example,
\begin{Mexample}
  \label{exther}
  Consider a \emph{thermostat} $c : \mathbb{R} \rightarrow \MH
  \mathbb{R}$ that, given the current temperature, \emph{linearly}
  raises it to, say, 20 {${}^\circ$}C. Such a behaviour can be
  expressed as $c \> r \> \eqdef ( (r + \> \_ \>) , \> 20 \circleddash
  r )$ where $\circleddash : \mathbb{R} \times \mathbb{R} \rightarrow
  \mathbb{R}$ is the truncated subtraction, \ie, $\circleddash =
  \hmul{(-)}{(>)}{\const{0}}$.
\end{Mexample}

\noindent
The execution time of system $c$ is thus inversely proportional to the
current temperature (which is given as input).

One may also consider another component that takes action after $c$,
and whose functionality is, for instance, to maintain the current
temperature.  The result is a composed system that can raise
temperatures to a desired level and then maintain them -- we will
explore this specific case in the next section. Of course, analogous
behaviour can also be found in \emph{e.g.}, cruise control systems,
water level regulators, and production lines. For example, imagine a
component of a cruise control system that gives control of the car's
velocity to another component whenever an obstacle is detected, or
the emergency mode becomes active. As we will see in the sequel, Kleisli
composition (for monad $\MH$) caters for this sort of action.

The following definition will help in the development of monad $\MH$.

\begin{Mdefinition}
  For any given topological space $X \in |\topo|$, define continuous
  function $\theta_X : \MH X \rightarrow X$ such that
  \begin{flalign*}
    \theta_X \> (f,d) \> \eqdef f \> 0.
  \end{flalign*}
\end{Mdefinition}

\noindent
Actually, we can canonically extend $\theta_X : \MH X \rightarrow X$
to a natural transformation $\theta : \MH \rightarrow Id$, since it is
straightforward to show that the following diagram commutes for any
continuous function $f : X \rightarrow Y$.
    \begin{center}
      \begin{tabular}{c}
        \xymatrix {
          \MH X
          \ar[r]^{ \MH f } \ar[d]_{ \theta_X } 
           & \MH Y \ar[d]^{ \theta_Y } \\
          X \ar[r]_{ f } & Y
        }
      \end{tabular}
    \end{center}

\noindent
Moreover, it becomes possible to express the first law of continuous
dynamical systems (recall the previous section) in a concise,
diagramatic manner: simply by saying that system $c : I \rightarrow
\MH I$ obeys the first law ((1) above) iff the diagram below commutes.
  \begin{center}
  \begin{tabular}{c}
    \xymatrix {
      I \ar[r]^c \ar[dr]_{id} & \MH I \ar[d]^{\theta_I} \\
      & I 
    }
  \end{tabular}
 \end{center}

\noindent
Actually, we can generalise the diagram to
  \begin{center}
  \begin{tabular}{c}
    \xymatrix {
      I' \ar[r]^c \ar@{^{(}->}[dr]_{\iota} & \MH I 
      \ar[d]^{\theta_I} \\
      & I 
    }
  \end{tabular}
 \end{center}
\noindent
where $\iota : I' \hookrightarrow I$ is the inclusion map $I'
\subseteq I$. We qualify as \emph{pre-dynamical} any system that
follows this generalised condition. Note that both examples above
(\ref{exsig} and \ref{exther}) concern pre-dynamical systems.

We shall now discuss how to equip $\MH$ with the structure of a monad.
As already mentioned, in programming semantics a monad captures a
behavioural effect and provides mechanisms to wrap a value into such
an effect and to flatten two effects into a single one. Technically,
they are referred to as the monad identity $\eta : Id \rightarrow
\MH$, and its multiplication $\mu : \MH \MH \rightarrow \MH$,
respectively.  Let us start by defining the unit operation $\eta : Id
\rightarrow \MH$, which will denote trivial evolutions.
\begin{Mdefinition}
\label{def:unit}
   Given a space $X \in |\topo|$, function $\eta_X : X
   \rightarrow \MH X$ is defined by 
   \begin{flalign*}
     \eta_X \> x \> \eqdef (\const{x}, \> 0).
   \end{flalign*}
\end{Mdefinition} 
\noindent
Intuitively, arrow $\eta_X : X \rightarrow \MH X$ defines a system
whose outputs are always trivial evolutions, \ie, with duration
zero. For this reason we will refer to $\eta_X$ as $copy_X$, and often
omit the subscript.

\begin{Mlemma}
  The mapping $\eta : Id \rightarrow \MH$ is a natural transformation,
  \ie,  for any topological space $X$, $\eta_X : X \rightarrow
  \MH X$ is a continuous function, and, moreover, the diagram below
  commutes
  \begin{center}
       \begin{tabular}{c }
               \xymatrix{
                X \ar[r]^h  \ar[d]_{\eta_X} & Y \ar[d]^{\eta_Y} \\
               \MH X \ar[r]_{ \MH h} & \MH Y 
           } 
       \end{tabular}
   \end{center}
   for any continuous function $h : X \rightarrow Y$.
\end{Mlemma}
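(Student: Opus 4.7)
The plan is to establish the two required properties in turn: continuity of each component $\eta_X$, and commutativity of the naturality square. Both should be direct, since the definition $\eta_X \> x = (\const{x}, 0)$ is manifestly simple; the only subtlety is checking that the corestriction to the subspace $\MH X \subseteq X^{\Rz} \times \Dur$ is legitimate.

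For continuity, I would use the rules of Figure \ref{fig:cont}. Writing $\eta_X$ as the pair $\pv{\lambda \pi_1, \const{0}} : X \rightarrow X^{\Rz} \times \Dur$ (with $\pi_1 : X \times \Rz \rightarrow X$), the map $\lambda \pi_1 : X \rightarrow X^{\Rz}$ is continuous by rule $(\lambda)$ — using that $\Rz$ is core-compact — and $\const{0} : X \rightarrow \Dur$ is trivially continuous; rule $(\times)$ then gives continuity of the pair. To conclude using rule $(\downarrow_r)$, I must verify that the image lies inside $\MH X$, i.e.\ that $\const{x} \comp \curlywedge_0 = \const{x}$ for every $x \in X$, which holds since $\const{x}$ is constant. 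Hence $\eta_X : X \rightarrow \MH X$ is continuous.

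For naturality, I would simply chase an arbitrary $x \in X$ around the square. On one side,
\begin{flalign*}
\MH h \comp \eta_X \> x = \MH h \> (\const{x}, 0) = (h^{\Rz} \> \const{x}, 0) = (h \comp \const{x}, 0) = (\const{h \> x}, 0).
\end{flalign*}
On the other side, $\eta_Y \comp h \> x = \eta_Y (h \> x) = (\const{h \> x}, 0)$. The two agree, so the diagram commutes.

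I do not expect any real obstacle here; the only minor care is to invoke core-compactness of $\Rz$ when applying $(\lambda)$ and to justify the corestriction to $\MH X$ via the constancy of $\const{x}$. The calculation of naturality itself is a one-line unfolding of definitions.
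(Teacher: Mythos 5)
Your proof is correct and follows essentially the same route as the paper's: the same decomposition $\eta_X = \pv{\lambda \pi_1, \const{0}}$ with the rules of Figure \ref{fig:cont} for continuity, and the same element chase via $h \comp \const{x} = \const{h \> x}$ for naturality. Your explicit check that $\const{x} \comp \curlywedge_0 = \const{x}$ justifies the corestriction step that the paper leaves implicit, which is a small but welcome addition.
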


\begin{proof}
  To see that $\eta_X$ is continuous, observe first that $\eta_X =
  \pv{\lambda \pi_1, \const{0}}$. Then,
  
  \begin{center}
     $\infer[ (\> \downarrow_r \>) ]{ \pv{ \lambda \pi_1,
         \underline{0} } : X \rightarrow \MH X } { \infer[ (\> \times \>) ]{
         \pv{ \lambda \pi_1, \underline{0} } : X \rightarrow X^{\Rz} \times \Dur }
       { \infer[ (\> \lambda \>) ]{ \lambda \pi_1 : X \rightarrow X^{\Rz} } {
           \pi_1 : X \times \Rz \rightarrow X } } }$
  \end{center}
  \noindent
  It remains to show the naturality of $\eta : Id \rightarrow \MH$.
  Consider the diagram
  \begin{center}
       \begin{tabular}{c }
               \xymatrix{
                x \ar@{|->}[rr]^h  \ar@{|->}[d]_{\eta_X} & & h \> x 
                \ar@{|->}[d]^{\eta_Y} \\
                (\underline{x}, 0) \ar@{|.>}[rr]_{h^{\Rz}
                  \times id}
                & & (\underline{h \> x}, 0) 
           } 
       \end{tabular}
   \end{center}
  \noindent
  where $h : X \rightarrow Y$ is an arbitrary continuous function.
  Property $h \comp \underline{x} = \underline{h \> x}$ entails its
  commutativity.
\end{proof}

\medskip

\noindent
It is also simple to see that, for any topological space $X \in |\topo|$, the
following diagram commutes
  \begin{center}
  \begin{tabular}{c}
    \xymatrix {
      X \ar[r]^{\eta_X} \ar[dr]_{id} & \MH X \ar[d]^{\theta_X} \\
      & X 
    }
  \end{tabular}
 \end{center}

\noindent
(\ie, that $\eta_X$ is pre-dynamical). 
Actually, this is one of two laws that characterise $\theta_X$ as an
\emph{Eilenberg-Moore} $\MH$-algebra \cite{cats}, a notion we will visit 
later in the paper.

The next step is to define multiplication $\mu : \MH \MH \rightarrow
\MH$. We start with an (auxiliary) definition of evolution (or path)
concatenation.
\begin{Mdefinition}
 Given any elements $(f,d), (g,e) \in \MH X$, define
 \begin{flalign*}
   (f,d) \conce \> (g,e) \> \eqdef (f \conce_d \> g, d + e)
 \end{flalign*}
 \noindent
 where $f \conce_d \> g \> \eqdef \hmul{f}{( \leq_d )}{g \> ( \> \_ 
   \> - d)}$.
\end{Mdefinition}
\noindent
Let us omit the subscript in $\conc_d$.  Note that $f \conc g$ is
continuous whenever the endpoint of $f$ and the startpoint of $g$
coincide. We will show that this condition is always met for the case
of multiplication.

\begin{Mdefinition}
\label{def:mult}
Given any topological space $X \in |\topo|$, define
\begin{flalign*}
  \mu_X \> (f,d) \> \eqdef \begin{cases}
    (\theta \comp f,d) \conce \> (f \> d) & \mbox{if } d \not = \infty \\
    (\theta \comp f, \> \infty) & \mbox{otherwise}
  \end{cases}
\end{flalign*}
\end{Mdefinition}

\noindent
Intuitively, multiplication will serve to concatenate the resulting
evolutions of two components.

\begin{Mlemma} 
  \label{lem:mult_nat}
  The family of mappings $\mu$ defines a natural tranformation.
\end{Mlemma}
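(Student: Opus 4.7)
The plan is to establish two things in turn: (i) for each object $X \in |\topo|$, the component $\mu_X : \MH\MH X \to \MH X$ is a well-defined continuous function (meaning, in particular, that the pair it returns actually lies in $\MH X$, i.e.\ stabilises after its duration), and (ii) for every continuous $h : X \to Y$ the naturality square $\MH h \comp \mu_X = \mu_Y \comp \MH\MH h$ commutes. I would tackle naturality first, since it is essentially a pointwise algebraic check, and leave the continuity of each $\mu_X$ for last because it is the genuine technical obstacle.

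For naturality, pick $(F,d) \in \MH\MH X$ and chase it around the square. The only ingredients needed are: the naturality of $\theta$ (already proved above), which gives $\theta_Y \comp \MH h = h \comp \theta_X$, and the elementary distributivity of post-composition over the conditional, $h \comp (u \lhd p \rhd v) = (h \comp u) \lhd p \rhd (h \comp v)$, which in turn yields $h \comp (u \conc_d v) = (h \comp u) \conc_d (h \comp v)$. With these, the case $d = \infty$ collapses to $(h \comp \theta \comp F, \infty)$ on both sides, while the case $d < \infty$ reduces, after writing $F(d) = (g,e)$, to the identity $h \comp \bigl((\theta \comp F) \conc_d g\bigr) = (h \comp \theta \comp F) \conc_d (h \comp g)$ paired with the unchanged duration $d + e$, which is exactly the distributivity above.

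For continuity, the strategy is to describe $\mu_X$ as a composite of already-continuous building blocks on the part of $\MH\MH X$ where $d < \infty$, and then handle the behaviour at $d = \infty$ separately. On the subspace $\{(F,d) : d < \infty\}$ one assembles $\mu_X$ as: evaluation $(F,d) \mapsto F(d)$ (continuous by rule $(\lambda)$ since $\Rz$ is core-compact), followed by pairing with $\theta \comp F$ and the sum $d + \pi_2(F(d))$, followed by the concatenation operator $\conc_d$. The crucial point making this concatenation continuous is the \emph{endpoint-matches-startpoint} condition $(\theta \comp F)(d) = \theta(F(d)) = \pi_1(F(d))\,0$, which guarantees that the two branches of the defining conditional for $\conc_d$ agree at the splitting instant. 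Well-definedness, i.e.\ that the result stabilises after duration $d+e$, then follows because $\pi_1(F(d)) \comp \curlywedge_e = \pi_1(F(d))$.

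The hard part, and the one I expect to require the most care, is continuity at points $(F, \infty)$. The idea is to observe that as $(F_n, d_n) \to (F, \infty)$, the first component $(\theta \comp F_n) \conc_{d_n} \pi_1(F_n(d_n))$ converges pointwise (and in the relevant exponential topology) to $\theta \comp F$, because on any bounded interval eventually $d_n$ exceeds it and only the $\theta \comp F_n$ branch of the conditional is selected; meanwhile $d_n + \pi_2(F_n(d_n)) \to \infty$ in $\Dur$. This matches the value $(\theta \comp F, \infty)$ prescribed by the second case of the definition, and thus closes the gluing. I would write this last step in terms of the subbasic opens of the exponential and compactification topologies to make the convergence argument rigorous, rather than appealing to nets informally.
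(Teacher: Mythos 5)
Your proposal is correct and, on the naturality half, coincides with the paper's proof: both are pointwise chases using the naturality of $\theta$ together with the fact that post-composition by $h$ distributes over the conditional (equivalently, that $\conc$ is natural), with the $d=\infty$ case collapsing immediately. Where you genuinely diverge is in the continuity argument. The paper avoids your case-split-and-glue entirely by reparametrising the concatenated evolution as a single formula: it uncurries $F$ to an element of $X^{\Rz\times\Rz}$ and evaluates at $(t\curlywedge d,\; t\circleddash d)$, so that both branches of the conditional (and, for the first component, even the $d=\infty$ case) are absorbed into the already-continuous primitives $\curlywedge$ and $\circleddash$; only the duration component still needs a bespoke neighbourhood check at $\infty$, which the paper does with the basic opens $(x,\infty]$ of the one-point compactification. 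Your route — pasting the two branches along the closed overlap $\{t=d\}$ using the endpoint-matches-startpoint identity $(\theta\comp F)(d)=\pi_1(F(d))\,0$, working on the open subspace $\{d<\infty\}$, and then verifying continuity at the points $(F,\infty)$ via subbasic compact-open neighbourhoods — is sound (the subspace is open, so local continuity there plus continuity at the infinite points does give global continuity, and your observation that any compact $K\subseteq\Rz$ is eventually dominated by $d$ makes the argument at infinity go through for both components). It is simply more laborious: you must justify joint continuity of $(F,d)\mapsto(\theta\comp F)\conc_d\pi_1(F(d))$ by uncurrying and applying the pasting lemma before re-applying rule $(\lambda)$, whereas the paper's $\curlywedge/\circleddash$ trick buys a uniform, case-free composite of continuous maps. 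Either way the well-definedness check (that the result stabilises after $d+e$) is as you describe.
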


\begin{proof}
  In appendix.
\end{proof}

\begin{Mlemma}
  For every topological space $X \in |\topo|$, the diagram below commutes
    \begin{center}
       \begin{tabular}{c }
               \xymatrix{
                \MH \MH X \ar[r]^{\mu_X}  \ar[d]_{\MH \theta_X} 
                & \MH X \ar[d]^{\theta_X} \\
               \MH X \ar[r]_{ \theta_X } & X
           } 
       \end{tabular}
   \end{center}
\end{Mlemma}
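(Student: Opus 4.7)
The plan is to unfold both legs of the square on an arbitrary element $(f,d)\in \MH\MH X$ and observe that both reduce to $\theta_X(f\, 0)$. Since $\theta_X$ picks out the evaluation at time $0$, the whole argument hinges on showing that the concatenation used in the definition of $\mu_X$ is indifferent to what happens after time $d$, and in particular leaves the value at $0$ untouched.

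First I would compute the upper-right composite. By definition of $\MH$ on morphisms, $\MH\theta_X(f,d) = (\theta_X \comp f, d)$, so $\theta_X \comp \MH\theta_X \, (f,d) = (\theta_X \comp f)\, 0 = \theta_X(f\, 0)$. Next I would compute the lower composite by case analysis on $d$, mirroring Definition \ref{def:mult}. When $d = \infty$, $\mu_X(f,d) = (\theta_X \comp f, \infty)$, and applying $\theta_X$ again yields $\theta_X(f\, 0)$ directly. When $d \neq \infty$, we have $\mu_X(f,d) = (\theta_X \comp f, d) \conce (f\, d) = ((\theta_X \comp f) \conc_d g, d + e)$, writing $f\, d = (g,e)$. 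The key observation is that the conditional definition of $\conc_d$ gives $((\theta_X \comp f) \conc_d g)\, 0 = (\theta_X \comp f)\, 0$ since $0 \leq d$ holds for every $d \in \Dur$; this covers the apparently delicate sub-case $d = 0$ as well, because there $g\,(0 - 0) = g\, 0 = \theta_X(f\, 0)$ coincides anyway by virtue of $f\, 0 = (g,e)$.

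Collecting the three sub-cases, both legs send $(f,d)$ to $\theta_X(f\, 0)$, so the square commutes. I do not expect any real obstacle here: the only subtlety is keeping track of which branch of the conditional in $\conce_d$ is active at time $0$, which is resolved uniformly by $0 \leq d$. For a truly point-free rendering one could instead factor the argument through the identity $\theta_X \comp (\_\conc_d \_) = \pi_1^{\MH X}$ on $\MH X \times \MH X$ (restricted to matching endpoints), but the direct case split is shorter and makes the role of $d=0$ vs.\ $d>0$ vs.\ $d=\infty$ transparent.
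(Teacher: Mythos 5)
Your proposal is correct and follows essentially the same route as the paper: evaluate both composites on a point $(f,d)$, split on $d$ finite versus infinite, and use the fact that concatenation evaluated at $0$ returns the first evolution's value at $0$ (because $0 \leq d$ always selects the left branch of $\conce_d$). The extra remark about the $d=0$ sub-case is harmless but not needed, since the conditional already resolves it.
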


\begin{proof}
  Consider a pair $(f,d) \in \MH \MH X$, where $d$ is finite. Then,
  \begin{eqnarray*}
    && \theta \comp \mu \> (f,d) 
    \justl{=}{Definition of $\mu$}
    \theta \> ((\theta \comp f, d) \conc (f \> d) )
    \justl{=}{Definition of $\conc$ on point $0$}
    \theta  \> ( (\theta \comp f, d)  )
    \justl{=}{Definition of $\MH$ }
    \theta \comp \MH \theta \> (f,d)
  \end{eqnarray*}
  \noindent
  Proof for the case in which $d$ is infinite is achieved via an
  analogous reasoning process.
\end{proof}

\noindent
This property, together with the fact that $\theta_X \comp \eta_X =
id$ (discussed above), entail that $\theta_X : \MH X \rightarrow X$ is
an Eilenberg-Moore $\MH$-algebra.  In words, an algebra of functor
$\MH$ that is compatible with the monadic structure defined
above. This notion will be rather useful in the sequel.

\begin{Mtheorem}
  \label{theo:monad}
  $\pv{\MH,\eta,\mu}$ forms a monad.
\end{Mtheorem}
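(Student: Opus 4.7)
The plan is to verify the three monad laws for $\langle \MH, \eta, \mu\rangle$: the left unit law $\mu \comp \MH \eta = id$, the right unit law $\mu \comp \eta_\MH = id$, and the associativity law $\mu \comp \MH \mu = \mu \comp \mu_\MH$. Naturality of $\eta$ and $\mu$ has already been established in the preceding lemmas, so only the equalities above remain.

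For the two unit laws I would just unfold the definitions on a generic element $(f,d) \in \MH X$. For the left unit, $\MH \eta \> (f,d) = (\eta \comp f, d)$, so applying $\mu$ (assume $d$ finite) gives the pair $(\theta \comp \eta \comp f, d) \conce ((\eta \comp f) \> d) = (f, d) \conce (\const{f\>d}, 0)$, whose first component is $f \conce_d \const{f\>d}$; this coincides with $f$ precisely because $f \comp \curlywedge_d = f$, so $f$ is already constant on $[d,\infty)$ with value $f \> d$. For the right unit, $\eta_{\MH X}(f,d) = (\const{(f,d)}, 0)$, and applying $\mu$ (duration $0$ is finite) yields $(\const{f\>0}, 0) \conce (f,d)$, whose concatenation at $0$ leaves $f$ unchanged. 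The $d = \infty$ cases of both laws are easier and reduce to $\theta \comp \eta \comp f = f$.

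The main obstacle will be associativity, since the definition of $\mu$ branches on whether durations are finite or infinite, and here two stacked durations are in play. Given $(F,d) \in \MH \MH \MH X$ with $F \> t = (f_t, d_t)$, I would compute both sides in parallel, splitting into the cases (i)~$d = \infty$, (ii)~$d$ finite and $d_d = \infty$, (iii)~both $d$ and $d_d$ finite. In each case the left-hand composition $\mu \comp \MH \mu$ produces a pair of the form $(\theta \comp \mu \comp F, d) \conce \mu(F\>d)$ (or its infinite-duration variant), while the right-hand composition $\mu \comp \mu_\MH$ first concatenates $(\theta_{\MH X} \comp F, d)$ with $F\>d = (f_d, d_d)$ and then flattens. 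The crucial algebraic fact I would use to reconcile the first components is the previously established identity $\theta \comp \mu = \theta \comp \MH \theta$, which guarantees $\theta_X \comp \mu_X \comp F = \theta_X \comp \theta_{\MH X} \comp F$; combined with the elementary identity $(p \conce q) \conce r = p \conce (q \conce r)$ for concatenation of evolutions (where shifts by $d$ and $d + d_d$ align because evolutions are constant past their durations), this yields the required equality.

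The technical bookkeeping, rather than any deep argument, is what makes associativity delicate: one must verify that the truncation conditions $f \comp \curlywedge_d = f$ propagate through concatenation (so that each intermediate pair genuinely lies in $\MH X$), and that the infinite-duration branches collapse correctly — in particular, once any duration hits $\infty$, every subsequent evolution is irrelevant, which matches the definition of $\mu$ on its second branch. Once the bookkeeping is settled, the argument is essentially the standard proof that Moore-path concatenation is strictly associative, extended to accommodate the one-point compactification $\Dur$.
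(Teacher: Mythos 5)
Your proposal is correct and follows essentially the same route as the paper's own proof: both unit laws are verified by direct unfolding on a generic $(f,d)$ (using the Eilenberg--Moore law $\theta \comp \eta = id$ together with the truncation condition $f \comp \curlywedge_d = f$), and associativity is reduced to the identity $\theta \comp \mu = \theta \comp \MH \theta$ plus associativity of evolution concatenation, with a case split on finite versus infinite durations. The only cosmetic difference is that you make the duration case analysis and the role of the truncation condition slightly more explicit than the paper, which works out the finite case in detail and remarks that the infinite case is simpler.
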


\begin{proof}
  In appendix.
\end{proof}

\section{ \dots and its Kleisli category ($\klH$) }
\label{sec:kl}

\noindent
If a monad abstracts a computational effect, its Kleisli category,
represents the universe of computations encapsulated in such an
effect. Hence, in the case of monad $\MH$, the associated Kleisli category of
$\MH$ $(\klH)$ provides an interesting setting to study the
requirements placed by continuity over different forms of
composition. Actually, the envisaged calculus of continuous, and
hybrid components is essentially its calculus. 

This section studies the Kleisli composition of $\klH$, and
illustrates its application to the specification of continuous systems
-- the hybrid ones will be discussed later in the paper.  We start
with the definition of $\klH$.

\begin{Mdefinition}
Category $\klH$ is defined as follows:

\begin{itemize}
 \item $| \klH| = | \topo |$,
 \item for any objects $I,O \in | \klH |$,
     ${\klH}(I,O) = {\topo}(I,\MH O)$,
 and for any object $I \in | \klH |$, $\eta_I$ is its identity.
 \item Given two
   arrows $c_1 : I \rightarrow \MH K$, 
   $c_2 : K \rightarrow \MH O$ their composition,
   denoted by $c_2 \kcomp c_1$, is given by
     $\mu \comp \MH c_2 \comp c_1$.
    Diagrammatically, 
   \begin{center}
  \begin{tabular}{c}
  \xymatrix{
    I \ar[r]^{c_1} \ar@/_3pc/[drr]_{c_2 \kcomp c_1} 
    & \MH K \ar[r]^{\MH c_2} \ar@{..}[d] & \MH \MH O \ar[d]^{\mu} \\
    & K \ar[r]_{c_2} & \MH O
                 }
  \end{tabular}
  \end{center}

\end{itemize}

\end{Mdefinition}
\medskip
\noindent
Whenever found suitable, we will 
denote an arrow $c : I \rightarrow \MH O$ as $c : I \karrow
O$, and $\pi_1 \comp c$ as $f_c : I \rightarrow O^{\Rz}$.

Recall that arrows $c : I \karrow O$ are here interpreted as
continuous components, which means that the Kleisli composition of
$\MH$ can be seen as a component operator. Let us explore its
behaviour: consider two systems
\begin{flalign*}
c_1 : I \karrow K, \> \> \> c_2 : K \karrow O.
\end{flalign*}

\noindent
For a given input $x \in I$, compute the execution time of $c_2 \kcomp
c_1$,
\begin{eqnarray*}
  && \pi_2 \comp (c_2 \kcomp c_1) \> \> (x)
  \justl{=}{Kleisli composition}
  \pi_2 \comp \mu \comp \MH c_2 \comp c_1 \> \> (x) 
  \justl{=}{Definition of $\MH$, let $d = \pi_2 \comp c_1 \> (x)$ }
  \pi_2 \comp  \mu (c_2 \comp (f_{c_1} \> x), d)
  \justl{=}{Definition of $\mu$  }
  d + \pi_2 (c_2 \comp (f_{c_1} \> x) \> \> d)
  \justl{=}{ Composition }
  d + \pi_2 (c_2 \> (f_{c_1} \> x \> d))
\end{eqnarray*}

\noindent
This means that the execution time of $c_2 \kcomp c_1$ is the
  sum of the execution times of $c_1$ (for input $x$) and $c_2$
  (which receives value $f_{c_1} \> x \> d$ as input).  On the other
hand,
\begin{eqnarray*}
  &&  \pi_1 \comp (c_2 \kcomp c_1) \> \> (x)
  \justl{=}{Kleisli composition}
  \pi_1 \comp \mu \comp \MH c_2 \comp c_1 \> \> (x)
  \justl{=}{Definition of $\MH$, let $d = \pi_2 \comp c_1 \> (x)$ }
  \pi_1 \comp \mu \> (c_2 \comp (f_{c_1} \> x), d)
  \justl{=}{Definition of $\mu$}
  \theta \comp c_2 \comp (f_{c_1} \> x) \conc
  (f_{c_2} \> (f_{c_1} \> x \> d))
  \justl{=}{Definition of $\conc$}
  \hmul{ \theta \comp c_2 \> (f_{c_1} \> x \> \> \_ \> ) }
       { (\leq_d) }
       { f_{c_2} \> (f_{c_1} \> x \> d) \> ( \> \_ \>  - d) }
\end{eqnarray*}

\medskip
\noindent
Hence, if $c_2$ is pre-dynamical, 
\begin{flalign*}
  f_{(c_2 \kcomp c_1)} \> x =   
  \hmul{ (f_{c_1} \> x \> \> \_ \> ) }
       { (\leq_d) }
       { f_{c_2} \> (f_{c_1} \> x \> d) \> ( \> \_ \>  - d) }
\end{flalign*}

\noindent
The last expression tells that for the duration of $c_1 \> x$, $c_2
\kcomp c_1 \> x$ evolves first according to $c_1$, and then, on its
termination, according to $c_2$ which receives as input the endpoint
of $f_{c_1} \> x$.  Clearly, this is the expected behaviour according
to the definition of operation $\mu$, which `concatenates' evolutions.
Intuitively, $c_2 \kcomp c_1$ may also be described as mentioned in
Section 1: component $c_1$ acts and then, at instant $d$, gives
control of its evolution to $c_2$.

If, however, $c_2$ is not pre-dynamical, then up to completion of
interval $[0,d]$, $c_2$ \emph{`alters'} the evolution of $c_1$; then it
proceeds according to its own evolution.
These notions are illustrated in the following examples.

\begin{Mexample}
  \label{exFM}
  Given two signal generators 
  $c_1,c_2 : \mathbb{R} \karrow  \mathbb{R}$ defined
  as
  \begin{flalign*}
    & c_1 \> r \> \eqdef (r + (sin \> \> \_ \>), 3 \pi ), \> \> \>
    c_2 \> r \> \eqdef (r + sin \> (3 \times \> \_ \>), 3 \pi)
  \end{flalign*}
  the evolution $c_1 \kcomp (c_2 \kcomp c_1) \> 0$ is represented by
  the plot below.
\end{Mexample}

  \pgfplotsset{samples=100}

  \begin{center}
  \scalebox{0.65}{
    \begin{tikzpicture} 
    \begin{axis}[
        title={$c_1 \kcomp (c_2 \kcomp c_1) \> 0$},
        xlabel=$x$, ylabel=$y$,
        ymin= -3,
        ymax = 3,
        xmin = 0,
        xmax = 28.2735,
        grid = major
    ]
        \addplot[smooth,
        domain = 0:3*3.1415] { sin(deg(x)) };
        \addplot[smooth,shift = {(94.245,0.0)},
        domain = 0:3*3.1415] { sin(deg(3*3.1415)) + sin(3*(deg(x))) };
        \addplot[smooth,shift = {(188.49,0.0)},
        domain = 0:3*3.1415] { sin(deg(6*3.1415)) + sin((deg(x))) };
    \end{axis}
    \end{tikzpicture}
    }
  \end{center}

\noindent
This type of signal is commonly seen in \emph{frequency modulation}:
the varying frequency is used to encode information for
electromagnetic transmission. Note that $c_1$ gives control for some
time to $c_2$, and then `takes it back'.

In order to amplify signals, one can use component $a : \mathbb{R}
\karrow \mathbb{R}$, where $a \> r \> \eqdef (\const{r \times 2},
0)$ (note that since system $a$ is not pre-dynamical it can alter
evolutions of other components). Given input $0$,
system $c_1 \kcomp (c_2 \kcomp (a \kcomp c_1))$, returns the following
evolution.

  \begin{center}
  \scalebox{0.65}{
    \begin{tikzpicture} 
    \begin{axis}[
        title={$c_1 \kcomp (c_2 \kcomp (a \kcomp c_1)) \> 0$},
        xlabel=$x$, ylabel=$y$,
        ymin= -3,
        ymax = 3,
        xmin = 0,
        xmax = 28.2735,
        grid = major
    ]
        \addplot[smooth,
        domain = 0:3*3.1415] { 2 * sin(deg(x)) };
        \addplot[smooth,shift = {(94.245,0.0)},
        domain = 0:3*3.1415] { sin(deg(3*3.1415)) + sin(3*(deg(x))) };
        \addplot[smooth,shift = {(188.49,0.0)},
        domain = 0:3*3.1415] { sin(deg(6*3.1415)) + sin((deg(x))) };
    \end{axis}
    \end{tikzpicture}
    }
  \end{center}

\begin{Mexample}
Suppose the temperature of a room is to be regulated according to the
following discipline: starting at $10 \> {}^\circ$C, seek
to reach and maintain $20 \> {}^\circ$C, but in no case surpass $20.5
\> {}^\circ$C.  To realise such a system, three elementary components
have to work together: $c_1$ to raise the temperature to $20 \>
{}^\circ$C, component $c_2$ to maintain a given temperature, and
component $c_3$ to ensure the temperature never goes over $20.5 \>
{}^\circ$C. Formally,
\begin{flalign*}
  & c_1 \> x = ( \> (x + \> \_ \>), 
  \> 20 \circleddash x \>  ) \\ 
  & c_2 \> x =  ( \> x + (\sin \> \_ \>), 
  \> \infty \>  ) \\
  & c_3 \> x = 
  ( \> \hmul{\underline{x}}{(x \leq 20.5)}{\underline{20.5}} 
  \>, 0 \> )
\end{flalign*}
\end{Mexample}

\noindent
In a first try one may compose $c_2,c_1$ into $c_2 \kcomp c_1$. This
results in a component able to read the current temperature, raise it
to $20 \> {}^\circ$C, and then keep it stable, as exemplified by the
plot below.

\begin{center}
\scalebox{0.7}{
  \begin{tikzpicture} 
  \begin{axis}[
      title={$c_2 \kcomp c_1 \> 10$},
      xlabel=$x$, ylabel=$y$,
      ymin= 10,
      ymax = 25,
      xmin = 0,
      xmax = 30,
      grid = major
  ]
      \addplot[smooth,
      domain = 0:10] { 10 + x };
      \addplot[smooth,
      domain = 10:30] { 20 + sin(deg(x - 10)) };
      \addplot[smooth,dashed,
      domain = -1 : 31] {20.5};
  \end{axis}
  \end{tikzpicture}
  }
\end{center}

\noindent
If, however, temperatures over $20.5\> {}^\circ$C occur, composition
$c_3 \kcomp (c_2 \kcomp c_1)$ puts the system back into the right
track as illustrated in the following plot.
\pgfplotsset{samples=70}

\pgfmathdeclarefunction{f}{1}{%
  \pgfmathparse{ (20 + sin(deg(#1 - 10)) > 20.5)*20.5 +
    (20 + sin(deg(#1 - 10)) <= 20.5)*(20 + sin(deg(#1 - 10))) }%
}

\begin{center}
  \scalebox{0.7}{
  \begin{tikzpicture} 
  \begin{axis}[
      title={$c_3 \kcomp (c_2 \kcomp c_1) \> 10$},
      xlabel=$x$, ylabel=$y$,
      ymin= 10,
      ymax = 25,
      xmin = 0,
      xmax = 30,
      grid = major
  ]
      \addplot[smooth,
      domain = 0:10] { 10 + x };
      \addplot[smooth,
      domain = 10:30] {  ((20 + sin(deg(x - 10)) > 20.5)*20.5)
    + ((20 + sin(deg(x - 10)) < 20.5)*(20 + sin(deg(x - 10))))  };
  \end{axis}
  \end{tikzpicture}
}
\end{center}

\noindent
Clearly, $c_3$ can be regarded as a supervisor system that, for the
sake of efficiency, only acts when temperatures exceed the threshold,
using just enough power to keep the temperate below the limit.
Actually, note that $c_3$ is able to play a supervisory role
  precisely because it is non pre-dynamical.  Of course in this
specific case, we assume that $c_3$ has an idealised behaviour, which,
despite pedagogical, is quite unrealistic.

\noindent
The examples above hint at an interesting property of 
evolutions with infinite duration.

\begin{Mtheorem}
   Consider two arrows $c_1 : I \karrow O$,
  $c_2 : O \karrow  O$. If system $c_2$ is pre-dynamical
  and $\img \> (\pi_2 \comp c_1 \comp \iota) \subseteq
  \{ \infty \}$ for some embedding $\iota : I' \hookrightarrow I$, then
  \begin{flalign*}
    (c_2 \kcomp c_1) \comp \iota = c_1 \comp \iota 
  \end{flalign*}
\end{Mtheorem}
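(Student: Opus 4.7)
The plan is to unfold the Kleisli composition pointwise and exploit the fact that, for arguments coming through $\iota$, the duration component of $c_1$ is always $\infty$, so the infinite branch of $\mu$'s definition applies and the messy concatenation term $(f,d) \conce (g,e)$ is avoided altogether.

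First I would fix an arbitrary $x \in I'$ and let $c_1(\iota\,x) = (f, \infty)$, which is legal because the hypothesis $\img(\pi_2 \comp c_1 \comp \iota) \subseteq \{\infty\}$ forces the duration component to be $\infty$. Applying the definition of Kleisli composition gives
\begin{flalign*}
(c_2 \kcomp c_1)(\iota\,x) \;=\; \mu \comp \MH c_2 \comp c_1 (\iota\,x) \;=\; \mu (c_2 \comp f,\, \infty).
\end{flalign*}
Now, because the duration is $\infty$, Definition \ref{def:mult} takes its second branch and yields $\mu(c_2 \comp f,\infty) = (\theta \comp c_2 \comp f,\, \infty)$, completely bypassing the evolution-concatenation operator $\conce$.

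The remaining step is to use the hypothesis that $c_2$ is pre-dynamical. Since $c_2 : O \karrow O$ has the same domain as codomain, pre-dynamicality (with the embedding $O \hookrightarrow O$ taken to be the identity) specialises to $\theta_O \comp c_2 = id_O$. Post-composing with $f$ gives $\theta \comp c_2 \comp f = f$, so
\begin{flalign*}
(c_2 \kcomp c_1)(\iota\,x) \;=\; (f, \infty) \;=\; c_1(\iota\,x),
\end{flalign*}
and since $x$ was arbitrary this establishes $(c_2 \kcomp c_1) \comp \iota = c_1 \comp \iota$.

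I do not anticipate any real obstacle here: the whole proof hinges on two observations — that the infinite-duration branch of $\mu$ degenerates into a pure composition with $\theta$, and that pre-dynamicality kills exactly that $\theta \comp c_2$ factor. The only mild subtlety worth making explicit in the write-up is justifying that $(f,\infty) \in \MH O$ is preserved by the construction, i.e.\ that $f \comp \curlywedge_\infty = f$ (trivial, since $\curlywedge_\infty = id$) and likewise for $\theta \comp c_2 \comp f$, so that all intermediate expressions genuinely live in $\MH O$ as required.
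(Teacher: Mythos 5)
Your proof is correct and follows essentially the same route as the paper's: unfold Kleisli composition, note that the infinite-duration branch of $\mu$ reduces to $(\theta \comp c_2 \comp f, \infty)$, and use pre-dynamicality of $c_2$ (which, since its domain is all of $O$, gives $\theta \comp c_2 = id$) to cancel the $\theta \comp c_2$ factor. The only difference is cosmetic — you argue pointwise while the paper writes the same chain of equalities in pointfree style.
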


\begin{proof}

  \begin{eqnarray*}
    && (c_2 \kcomp c_1) \comp \iota
    \justl{=}{Kleisli composition, $\img \>
      (\pi_2 \comp c_1 \comp \iota) \subseteq
      \{ \infty \}$ }
    \mu \> (c_2 \comp (f_{c_1} \comp \iota ), \infty )
    \justl{=}{ Definition of $\mu$ }
    ( \theta \comp c_2 \comp (f_{c_1} \comp \iota ), \infty )
    \justl{=}{System $c_2$ is pre-dynamical }
    ( f_{c_1} \comp \iota , \infty )
    \justl{=}{ Notation }
    c_1 \comp \iota
  \end{eqnarray*}
\end{proof}

\begin{Mcorollary}
  If $c_2$ is pre-dynamical and $\img \> (\pi_2 \comp c_1) \subseteq
  \{ \infty \}$, then $ c_2 \kcomp c_1 = c_1$.
\end{Mcorollary}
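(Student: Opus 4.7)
The plan is to obtain the corollary as an immediate specialisation of the preceding theorem. Observe that the identity map $id_I : I \hookrightarrow I$ is trivially an embedding, and the hypothesis $\img \> (\pi_2 \comp c_1) \subseteq \{\infty\}$ is precisely the condition $\img \> (\pi_2 \comp c_1 \comp \iota) \subseteq \{\infty\}$ with $\iota = id_I$. Since $c_2$ is assumed pre-dynamical, all hypotheses of the theorem are met with $I' = I$ and $\iota = id_I$.

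Applying the theorem then yields $(c_2 \kcomp c_1) \comp id_I = c_1 \comp id_I$, which, after cancelling identities on both sides, gives the desired equality $c_2 \kcomp c_1 = c_1$. There is no real obstacle here; the only thing to check is that taking $\iota = id_I$ is admissible, which is immediate since any identity morphism is an embedding (an isomorphism onto its image, with the subspace topology on $I \subseteq I$ being $I$ itself).
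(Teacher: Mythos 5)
Your proposal is correct and matches the paper's (implicit) derivation exactly: the corollary is the special case of the preceding theorem obtained by taking $I' = I$ and $\iota = id_I$, after which the identities cancel. Nothing further is needed.
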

\noindent
This means that if evolutions of the first component always exhibit an
infinite duration, the second one, if pre-dynamical, will never have
the chance to execute.

\medskip
\noindent
In general, $\MH$-Kleisli composition provides the basic composition
mechanism for continuous components; the structure of $\klH$ yields
its basic laws. To be more concrete, take $copy$ as the
trivial system that outputs its input with duration zero (\ie,
  the unit of monad $\MH$). Then, given systems $c_1,c_2,c_3$
\begin{flalign}
  & copy \> \kcomp \> c_1 = c_1 \\
  & c_1 \> \kcomp \> copy = c_1 \\
  & (c_3 \> \kcomp \> c_2) \> \kcomp 
  \> c_1 = c_3 \> \kcomp \> (c_2 \> \kcomp \>  c_1)
\end{flalign}

\section{Wiring mechanisms and (additional) composition operators}
\label{sec:Adj}
\noindent
In a category, (co)limits are a main tool to `build new arrows from
old ones', which in the case of $\klH$ translates to new forms of
component composition.  Actually, coproducts are easy to obtain
through the canonical adjunction between $\topo$ and $\klH$,

  \begin{center}
  \begin{tabular}{c}
  \xymatrix{
    \topo \ar@/^1pc/[rr]^{L} & {\> \> \perp}  & \klH
    \ar@/^1pc/[ll]^{R}
                 }
  \end{tabular}
  \end{center}

  \noindent
  which entails that $\klH$ inherits colimits of $\topo$ through
  $L$. For notational simplicity, given a continuous function $f : X
  \rightarrow Y$, we will denote system $L f = \eta \comp f 
  : X \karrow Y$ by $\corn{ f }{} $.

  \medskip
  \noindent
  In $\klH$, the coproduct (also known as a \emph{choice} operator) is
  inherited as follows: given two components

  \begin{center}
  \begin{tabular}{c}
    \xymatrix{ 
             I_1 \ar@{->}[dr]_{c_1}|-\xykarrow && 
             I_2 \ar@{->}[dl]^{c_2}|-\xykarrow
             \\
             & O &
           } 
    \end{tabular}
\end{center}

\noindent
define component $[c_1,c_2] : I_1 + I_2 \karrow O$ 
which makes the following diagram to commute.
  \begin{center}
  \begin{tabular}{c}
    \xymatrix{ 
      I_1 \ar@{->}[rr]^{ \corn{ i_1 } }|-\xykarrow
      \ar@{->}[drr]_{c_1}|-\xykarrow
      && I_1 + I_2 \ar@{->}[d]^{[c_1,c_2]}|-\xykarrow &&
      I_2 \ar@{->}[ll]_{ \corn{ i_2 } }|-\xykarrow 
      \ar@{->}[dll]^{c_2}|-\xykarrow \\
      && O &&
    } 
    \end{tabular}
\end{center}

\noindent
Intuitively, $[c_1,c_2]$ behaves as $c_1$ whenever input $I_1$ is
chosen, and as $c_2$ otherwise.  Such a mechanism is useful to
aggregate systems with the same codomain; the result being a
\emph{singular} system with different modes of operation
(corresponding to the respective subcomponents), chosen according to
the input received.  As usual, a functorial \emph{sum} operator is
easily defined.

\begin{Mdefinition}
  Consider components $c_1 : I_1 \karrow O_1$, $c_2 : I_2
  \karrow O_2$. Then define component $
  c_1 \boxplus c_2 : I_1 + I_2 \karrow O_1 + O_2$ as
  \begin{flalign*}
   & c_1 \boxplus c_2 \> \eqdef [ \corn{ i_1  } \kcomp c_1, 
  \corn{ i_2} \kcomp c_2 ]
  \end{flalign*}
\end{Mdefinition}

\noindent
The definition of operator choice  as the coproduct
universal arrow in $\klH$, yields a number of useful laws for free.
  \begin{flalign}
    & c_3 \kcomp [c_1, c_2] =  [c_3 \kcomp c_1, c_3 \kcomp c_2] \\
    & (c_1 \boxplus c_2) \kcomp \corn{i_1} = \corn{i_1} \kcomp c_1 \\
    & (c_1 \boxplus c_2) \kcomp \corn{i_2} = \corn{i_2} \kcomp c_2 \\
    & copy_X \boxplus copy_Y = copy_{X + Y} \\
    & (d_1 \boxplus d_2) \kcomp (c_1 \boxplus c_2) = 
    (d_1 \kcomp c_1) \boxplus (d_2 \kcomp c_2) \\
    & [d_1, d_2] \kcomp (c_1 \boxplus c_2) = 
    [d_1 \kcomp c_1, d_2 \kcomp c_2]
  \end{flalign}

\noindent
Moreover,
\begin{Mlemma}
  \label{inter_sum}
  For any continuous functions $f: X_1 \rightarrow Y_1, g : X_2
\rightarrow Y_2$, the following equation holds
  \begin{flalign}
    \corn{f} \boxplus \corn{g} = 
    \corn{f + g }
  \end{flalign}
\end{Mlemma}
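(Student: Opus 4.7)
The plan is to unfold both sides using the definitions of $\boxplus$ and $\corn{\cdot}$, reducing the claim to an application of the fusion law for coproducts together with functoriality of the Kleisli embedding $L$. Expanding the left-hand side via the definition of $\boxplus$ gives $\corn{f} \boxplus \corn{g} = [\corn{i_1} \kcomp \corn{f}, \corn{i_2} \kcomp \corn{g}]$. Since in $\topo$ the coproduct of morphisms is by definition $f + g = [i_1 \comp f, i_2 \comp g]$, the right-hand side expands to $\corn{f + g} = \eta \comp [i_1 \comp f, i_2 \comp g]$.

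The essential subclaim I will use is that $L$ is functorial, i.e.\ $\corn{h_2 \comp h_1} = \corn{h_2} \kcomp \corn{h_1}$ for composable continuous $h_1, h_2$. This falls out by unfolding Kleisli composition into $\mu \comp \MH(\eta \comp h_2) \comp \eta \comp h_1$, then applying functoriality of $\MH$, naturality of $\eta$ to push $\eta$ past $h_2$, and finally the monad unit law $\mu \comp \MH \eta = id$, leaving $\eta \comp h_2 \comp h_1 = \corn{h_2 \comp h_1}$. Applying this identity to each summand, the left-hand side becomes $[\corn{i_1 \comp f}, \corn{i_2 \comp g}]$.

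For the right-hand side, the standard fusion law for coproducts in $\topo$ (post-composition with any arrow distributes over the cotuple) yields $\eta \comp [i_1 \comp f, i_2 \comp g] = [\eta \comp i_1 \comp f, \eta \comp i_2 \comp g]$, which is precisely $[\corn{i_1 \comp f}, \corn{i_2 \comp g}]$, matching the normal form obtained for the left-hand side.

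I do not foresee a real obstacle: at the abstract level, the claim is just that the Kleisli left adjoint $L$ of the adjunction $L \dashv R$ displayed above the statement preserves finite coproducts, and the concrete chain of equalities above merely makes this preservation explicit; the only non-trivial ingredient is the monad unit law $\mu \comp \MH \eta = id$ used to establish functoriality of $L$.
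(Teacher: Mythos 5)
Your proof is correct and follows essentially the same route as the paper's: unfold $\boxplus$, apply functoriality of $L$ to each summand, use the coproduct fusion law to pull $\eta$ (i.e.\ $copy$) out of the cotuple, and recognise the definition of $f+g$. The only cosmetic difference is that you normalise both sides to the common form $[\corn{i_1 \comp f}, \corn{i_2 \comp g}]$ and additionally justify the functoriality of $L$ via the unit law, a standard fact the paper invokes without proof.
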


\begin{proof}
  \begin{eqnarray*}
    && \corn{f} \boxplus \corn{g} 
    \justl{=}{Definition of $\boxplus$}
    [\corn{i_1} \kcomp \corn{f} , \corn{i_2} \kcomp \corn{g} ]
    \justl{=}{$L$ is a functor}
    [\corn{i_1 \comp f} , \corn{i_2 \comp g} ]
    \justl{=}{Definition of $L$}
    [copy \comp i_1 \comp f , copy \comp i_2 \comp g ]
    \justl{=}{ Universal property of coproduct }
    copy \comp [ i_1 \comp f , i_2 \comp g ]
    \justl{=}{Definition of $+$, definition of $L$}
    \corn{ f + g }
  \end{eqnarray*}
\end{proof}

\medskip
\noindent
The left adjoint is also useful to lift functions to the universe of
$\klH$.  This provides a number of interesting operations and wiring
mechanisms.  For example, recall the diagonal function $\vartriangle :
X \rightarrow X \times X $ which duplicates the input value; the
corresponding lifted version $\corn{\vartriangle} : X \karrow X \times
X$ duplicates evolutions.  Take now the scalar multiplication $*_s :
\mathbb{R} \rightarrow \mathbb{R}$; operation $\corn { *_s } :
\mathbb{R} \karrow \mathbb{R}$ can be used to amplify signals, a
ubiquitous procedure both in signal and control theory. Another
example is $\corn{ \pi_1 } : X \times Y \karrow X$ (resp. $\corn{
  \pi_2 } : X \times Y \karrow X$ ) which eliminates the right
(resp. left) side of `paired' evolutions.  Finally, $\corn{sw} : X
\times Y \karrow Y \times X$ swaps the order of evolutions, a
functionality graphically represented by wire swapping.

Since $L$ is a functor, the following
laws also come for free
  \begin{flalign}
    & \corn{id} = copy  \label{idfunc_F} \\
    &   \corn{ g } \kcomp \corn{ f }  = 
    \corn { g \comp f } \label{func_F}
  \end{flalign}

\medskip
\noindent
Finding limits in a Kleisli category through left adjoint $L$ is often
more difficult.  However, under specific conditions, $L$ also
preserves limits. The following theorem makes such conditions precise.

\begin{Mtheorem}
  \label{theo_kadj}
  Consider the Kleisli adjunction $L \dashv R$ of a given monad
  $\pv{\mathcal{T},\eta,\mu}$. Functor $L$ preserves whatever
  limits $\mathcal{T}$ does.
\end{Mtheorem}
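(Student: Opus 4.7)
The plan is to reduce the question about Kleisli limits to a question about limits of $\mathcal{T}D$ in the base category, exploiting that $L$ is the identity on objects and that the Kleisli structure is governed by $\eta$ and $\mu$.

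Concretely, let $D : \mathcal{J} \to \mathbf{C}$ be a diagram whose limit $(L_{\mathcal{J}}, \pi_i)$ is preserved by $\mathcal{T}$, so that $(\mathcal{T} L_{\mathcal{J}}, \mathcal{T}\pi_i)$ is a limit cone for $\mathcal{T} D$ in $\mathbf{C}$. I want to show that $(LL_{\mathcal{J}}, L\pi_i)$ is a limit cone for $L D$ in $\mathbf{C}_{\mathcal{T}}$. The key step is to establish a bijection between cones over $LD$ in $\mathbf{C}_{\mathcal{T}}$ with apex $X$ and cones over $\mathcal{T}D$ in $\mathbf{C}$ with apex $X$. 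A family $f_i : X \karrow D_i$, \ie\ $f_i : X \to \mathcal{T} D_i$, is a cone over $LD$ iff, for every $\alpha : i \to j$ in $\mathcal{J}$, $LD(\alpha) \kcomp f_i = f_j$. Unfolding Kleisli composition and applying the monad laws $\mu \comp \mathcal{T}\eta = id$ and $\mu \comp \eta_{\mathcal{T}} = id$ reduces this to $\mathcal{T}D(\alpha) \comp f_i = f_j$, which is precisely the condition for $(f_i)$ to be a cone over $\mathcal{T}D$ in $\mathbf{C}$.

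With this correspondence in hand, the rest is automatic. Given such a cone $(f_i)$, preservation of the limit by $\mathcal{T}$ provides a unique $f : X \to \mathcal{T} L_{\mathcal{J}}$ with $\mathcal{T}\pi_i \comp f = f_i$. I would then verify that this same arrow $f$, now viewed as $f : X \karrow L_{\mathcal{J}}$, satisfies the Kleisli equation $L\pi_i \kcomp f = f_i$; the computation
\begin{flalign*}
L\pi_i \kcomp f = \mu \comp \mathcal{T}(\eta \comp \pi_i) \comp f = \mu \comp \mathcal{T}\eta \comp \mathcal{T}\pi_i \comp f = \mathcal{T}\pi_i \comp f = f_i
\end{flalign*}
does the job. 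Uniqueness in $\mathbf{C}_{\mathcal{T}}$ follows from the same identity: any $g : X \karrow L_{\mathcal{J}}$ with $L\pi_i \kcomp g = f_i$ satisfies $\mathcal{T}\pi_i \comp g = f_i$, whence $g = f$ by uniqueness of the mediating arrow in $\mathbf{C}$.

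I expect no serious obstacle; the only place where a little care is needed is the translation between cone-compatibility in $\mathbf{C}_{\mathcal{T}}$ and in $\mathbf{C}$, where one must remember that morphisms of $LD$ in $\mathbf{C}_{\mathcal{T}}$ are $\eta \comp D(\alpha)$ and that Kleisli composition with such a morphism collapses, via the monad unit law, to ordinary composition with $\mathcal{T}D(\alpha)$. Once this reduction is visible, the proof is essentially a one-line invocation of the hypothesis that $\mathcal{T}$ preserves the limit.
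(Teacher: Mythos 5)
Your proof is correct, but it takes a genuinely different route from the paper. The paper argues abstractly through the factorisation $\mathcal{T} = U K L$ of the monad via the Eilenberg--Moore category $\mathbf{C}^{\mathcal{T}}$: since the comparison functor $K$ is fully faithful and the forgetful functor $U$ is monadic, both reflect limits, so from $\mathcal{T}(\lim_{\leftarrow} D) = UKL(\lim_{\leftarrow} D)$ being a limit of $UKLD$ one peels off $U$ and $K$ to conclude that $L(\lim_{\leftarrow} D)$ is a limit of $LD$. You instead verify the universal property by hand, and the heart of your argument is the observation that a cone $(f_i : X \karrow D_i)$ over $LD$ in $\mathbf{C}_{\mathcal{T}}$ is \emph{literally the same data} as a cone $(f_i : X \rightarrow \mathcal{T}D_i)$ over $\mathcal{T}D$ in $\mathbf{C}$, because Kleisli composition with $LD(\alpha) = \eta \comp D(\alpha)$ collapses, via $\mu \comp \mathcal{T}\eta = id$, to ordinary composition with $\mathcal{T}D(\alpha)$; the same identity converts the mediating-arrow condition $L\pi_i \kcomp f = f_i$ into $\mathcal{T}\pi_i \comp f = f_i$, so existence and uniqueness in $\mathbf{C}_{\mathcal{T}}$ are verbatim the universal property of the limit cone $(\mathcal{T}\pi_i)$ in $\mathbf{C}$. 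The paper's proof is shorter and reuses standard machinery, but presupposes familiarity with the Kleisli/Eilenberg--Moore comparison and the limit-reflection properties of $K$ and $U$; yours is elementary, self-contained (only the monad laws and the definition of $\kcomp$ are used), and makes explicit the hom-set isomorphism $\mathbf{C}_{\mathcal{T}}(X,Y) \cong \mathbf{C}(X,\mathcal{T}Y)$ that underlies the whole phenomenon --- which is arguably the more illuminating explanation of \emph{why} the theorem holds.
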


\begin{proof}
  Observe the diagram
    \begin{center}
    \begin{tabular}{c}
      \xy 
      (7.6,-7.0)*={\rotatebox[origin=c]{45}{$\dashv$}},
      (21.8,-6.6)*={\rotatebox[origin=c]{-45}{$\dashv$}};
      \xymatrix {
         \mathbf{C}_{\mathcal{T}} \ar[rr]^{K} 
         \ar@/^0.5pc/[dr]^{R}
         && \mathbf{C}^{\mathcal{T}} \ar@/^0.5pc/[dl]^{U} \\
         & \mathbf{C} 
           \ar@/^0.5pc/[ul]^{L}
           \ar@/^0.5pc/[ur]^{F}
           \ar@(dl,dr)_{\mathcal{T}}
         & 
      }
      \endxy
    \end{tabular}
    \end{center}
    \noindent
    where $\mathbf{C}^{\mathcal{T}}$ is the Eilenberg-Moore category
    for monad $\mathcal{T}$ \cite{cats}, and $K$ the corresponding
    (\emph{fully faithful}) functor such that $\mathcal{T} = U K L$.
    Then, consider a limit $\lim_{\leftarrow} D$ in $\mathbf{C}$ and
    assume that $\mathcal{T}$ preserves it. This means that
    $\mathcal{T}( \lim_{\leftarrow} D )$ is the limit of $\mathcal{T}
    D$, and equivalently, $UKL( \lim_{\leftarrow} D )$ is the limit of
    $UKL D$. Since both $U$ and $K$ reflect limits, $L(
    \lim_{\leftarrow} D )$ must be the limit of $ L D$.
\end{proof}

\noindent
Note that the theorem above was stated in general terms and is thus
applicable to any monad. Even though easily proved, its consequences
are quite useful. For example, in the case of $\MH$ it provides 
pullbacks in $\klH$, as

\begin{Mtheorem}
  \label{theo_pull}
  Functor $\MH$ preserves pullbacks.
\end{Mtheorem}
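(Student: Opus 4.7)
The plan is to construct a canonical homeomorphism between $\MH P$ and the pullback of $\MH f$ and $\MH g$ in $\topo$, where $P$ is the pullback in $\topo$ of two continuous maps $f : X \rightarrow Z$ and $g : Y \rightarrow Z$. The strategy is to decompose $\MH$ into simpler ingredients, each preserving pullbacks, and then match the subspace condition $h \comp \curlywedge_d = h$ across the decomposition.

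First, I would observe that the ambient functor $X \mapsto X^{\Rz} \times \Dur$ preserves pullbacks: since $\Rz$ is core-compact, $(\>\_\>)^{\Rz}$ is right adjoint to $\>\_\> \times \Rz$ and so preserves all limits, and product with any fixed space preserves limits as well. As a consequence, the canonical comparison
\[
P^{\Rz} \times \Dur \longrightarrow (X^{\Rz} \times \Dur) \times_{Z^{\Rz} \times \Dur} (Y^{\Rz} \times \Dur)
\]
is a homeomorphism. Next, I would consider the comparison map $\alpha : \MH P \rightarrow \MH X \times_{\MH Z} \MH Y$ defined by $\alpha(h,d) = ((p_1 \comp h, d), (p_2 \comp h, d))$, with $p_1, p_2$ the legs of $P$. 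Since $\MH f$ and $\MH g$ act as the identity on the $\Dur$-coordinate, any pair $((\phi, d), (\psi, e))$ in the pullback must satisfy $d = e$ and $f \comp \phi = g \comp \psi$; the universal property of $P$ in $\topo$ then produces a unique continuous $h : \Rz \rightarrow P$ with $p_1 \comp h = \phi$ and $p_2 \comp h = \psi$. The subspace condition $h \comp \curlywedge_d = h$ then follows from $\phi \comp \curlywedge_d = \phi$ and $\psi \comp \curlywedge_d = \psi$ by applying the uniqueness part of the same universal property to $h \comp \curlywedge_d$ and $h$, which agree after post-composition with both projections. This shows $\alpha$ is a bijection.

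Finally, to lift $\alpha$ to a homeomorphism I would exploit that $\MH P$ and $\MH X \times_{\MH Z} \MH Y$ are subspaces of the ambient spaces related by the homeomorphism of the first step, and that $\alpha$ is precisely the restriction of that ambient homeomorphism. The main obstacle will be confirming that the subspace condition defining $\MH P$ corresponds, under the ambient identification, to the conjunction of the conditions defining $\MH X$ and $\MH Y$; but this is exactly the set-level equivalence already argued via the universal property above, so the matching of subspaces is automatic. Hence $\alpha$ is a homeomorphism and $\MH$ preserves pullbacks.
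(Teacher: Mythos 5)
Your proposal is correct and follows essentially the same route as the paper: both decompose $\MH$ as the subfunctor of the pullback-preserving ambient functor $(\>\_\>)^{\Rz} \times \Dur$ cut out by the condition $h \comp \curlywedge_d = h$, and both reduce the theorem to checking that this condition on a map into the pullback is equivalent to the corresponding conditions on its two projections. The only difference is presentational --- you verify that the canonical comparison map is a homeomorphism (deriving the subspace condition from uniqueness in the universal property), whereas the paper directly factors an arbitrary cone's mediating arrow through $\MH(A \times_C B)$ by computing $\pv{e_1,e_2} \comp \curlywedge_d = \pv{e_1,e_2}$.
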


\begin{proof}
  In the appendix.
\end{proof}

\noindent
More concretely, theorems \ref{theo_kadj} and \ref{theo_pull} assert
that any cospan $A \stackrel{f}{\rightarrow} C
\stackrel{g}{\leftarrow} B$ in $\topo$ gives rise to a pullback in
$\klH$, diagrammatically described as
    \begin{center}
      \begin{tabular}{c}
        \xymatrix {
          A \times_C B 
          \ar[r]^{\corn{ \pi_2 } }|-\xykarrow 
          \ar[d]_{ \corn { \pi_1 } }|-\xykarrow 
          \pullbackcorner & B \ar[d]^{ \corn { g } }|-\xykarrow \\
          A \ar[r]_{ \corn { f } }|-\xykarrow & C
        }
      \end{tabular}
    \end{center}

\noindent
One interesting cospan, worthy of special attention, is $A
\stackrel{!}{\rightarrow} 1 \stackrel{!}{\leftarrow} B$, which induces
the pullback
    \begin{center}
      \begin{tabular}{c}
        \xymatrix {
          A \times_1 B 
          \ar[r]^{\corn{ \pi_2 } }|\xykarrow
          \ar[d]_{ \corn { \pi_1 } }|\xykarrow 
          \pullbackcorner & B \ar[d]^{ \corn { ! } }|\xykarrow \\
          A \ar[r]_{ \corn { ! } }|\xykarrow & 1
        }
      \end{tabular}
    \end{center}

\noindent
Indeed, such a construction brings parallelism up front, and moreover,
makes possible to combine evolutions. More
concretely, the diagram states that whenever two systems are
\emph{compatible} -- in the sense that for any input they produce
evolutions with equal duration -- a new component that encapsulates
their parallel composition can be defined. Formally, two systems
$c_1 : I \karrow A$, $c_2 : I \karrow B$ are called compatible when
the diagram
    \begin{center}
      \begin{tabular}{c}
        \xymatrix {
          I
          \ar[r]^{ c_2 }|\xykarrow \ar[d]_{ c_1 }|\xykarrow
           & B \ar[d]^{ \corn { ! } }|\xykarrow \\
          A \ar[r]_{ \corn { ! } }|\xykarrow & 1
        }
      \end{tabular}
    \end{center}
\noindent
commutes (note that this is not trivially true, because $1$ is not a
final object in $\klH$). Then, let $E$ denote set $\{ \> ((f,d),(g,e))
\in \MH A \times \MH B \> | \> d = e \> \}$.  When the two systems are
compatible, a new component $\pulb{c_1,c_2} : I \karrow (A \times_1
B)$ comes forward through the mediating arrow (of the pullback), as
follows
\begin{flalign*}
  \pv{\pv{c_1,c_2}} \> \eqdef \gamma \comp \pv{c_1,c_2}
\end{flalign*}
\noindent
where $I \typ{\pv{c_1,c_2} {}} E \typ{\gamma}{\MH (A \times_1 B)}$, 
$ \> \> \gamma \> ((f,d),(g,d)) \> \eqdef ( \pv{f,g},d)$.

\medskip
\noindent
Note that $\img \> \pv{c_1,c_2} \subseteq E$ precisely because of the
assumption of compatibility between components (\emph{cf.} proof of
Theorem \ref{theo_pull}). In order to keep notation simple, we will
omit the $1$ in the subscript of $(A \times_1 B)$.

We call $\pulb{c_1,c_2}$ the \emph{strict parallel} composition of $c_1$ and
$c_2$. Let us illustrate its behaviour through a number of examples.
\begin{Mexample}
  \label{exam_com}
  Consider two signal generators, 
  \begin{flalign*}
  c_1 \> x \> = \>  ( \> x + ( \sin \> \_ \> ), \> 20 \> ), \> \> 
  c_2 \> x \> = \>  ( \> x + \sin \> (3 \times \_ \> ) , \> 20 \> )
  \end{flalign*}
  For input $0$, system $\pv{\pv{c_1,c_2}}$ exhibits the following
  behaviour

  \pgfplotsset{samples=100}

  \begin{center}
    \scalebox{0.7}{
    \begin{tikzpicture} 
    \begin{axis}[
        title={$\pv{\pv{c_1,c_2}} \> 0$},
        xlabel=$x$, ylabel=$y$,
        ymin= -3,
        ymax = 3,
        xmin = 0,
        xmax = 20,
        grid = major
    ]
        \addplot[smooth,
        domain = 0:20] { sin(deg(x)) };
        \addplot[smooth, blue!60!white,
        domain = 0:20] { sin(3*(deg(x))) };
    \end{axis}
    \end{tikzpicture}
    }
  \end{center}
  
  \noindent
  Consider now component $\corn{+} : \mathbb{R} \times \mathbb{R}
  \rightarrow \mathbb{R}$ which adds incoming signals. Then, for input
  $0$, the composed system $\corn{+} \kcomp \pv{\pv{c_1,c_2}}$ yields
  the following signal.
  \pgfplotsset{samples=100}

  \begin{center}
    \scalebox{0.7}{
    \begin{tikzpicture} 
    \begin{axis}[
        title={$\corn{+} \kcomp \pv{\pv{c_1,c_2}} \> 0$},
        xlabel=$x$, ylabel=$y$,
        ymin= -3,
        ymax = 3,
        xmin = 0,
        xmax = 20,
        grid = major
    ]
        \addplot[smooth,
        domain = 0:20] { sin(deg(x)) +  sin(3*(deg(x)))};
    \end{axis}
    \end{tikzpicture}
    }
  \end{center}
\end{Mexample}

\medskip
\noindent
Since strict parallelism comes from a pullback, the following operator
arises in a canonical way.
\begin{Mdefinition}
  Consider two
  continuous systems $c_1 : I_1 \karrow O_1$, $c_2 : I_2
  \karrow O_2$ such that $c_1 \kcomp \widehat{\pi_1}$ and 
  $c_2 \kcomp \widehat{\pi_2}$ are compatible. Then, define $
  c_1 \boxtimes c_2 : I_1 \times I_2 \karrow O_1 \times O_2$ as
  \begin{flalign*}
   & c_1 \boxtimes c_2 \> \eqdef \pulb{c_1 \kcomp \corn{\pi_1} , 
      c_2 \kcomp \corn{\pi_2}}
  \end{flalign*}
\end{Mdefinition}

\medskip
\noindent
Moreover, the following laws come for free, further contributing to an
emerging calculus of continuous and hybrid components: in each
  equation below, assume that both its sides are well defined (\ie\
  that the compatibility conditions are respected). Then, we have,
\begin{flalign}
 & \pulb{c_1,c_2} \kcomp d = \pulb{c_1 \kcomp d, c_2 \kcomp d} \\
 & \corn{ \pi_1 } \kcomp (c_1 \boxtimes c_2) = c_1 \kcomp \corn{ \pi_1 } \\
 & \corn{ \pi_2 } \kcomp (c_1 \boxtimes c_2) = c_2 \kcomp \corn{ \pi_2 } \\
 & \pulb{c_1,c_2} = (c_1 \boxtimes c_2) \kcomp \corn{\vartriangle} \\
 & copy_X \boxtimes copy_Y = copy_{X \times Y} \\
 & (d_1 \boxtimes d_2) \kcomp (c_1 \boxtimes c_2) = 
 (d_1 \kcomp c_1) \boxtimes (d_2 \kcomp c_2) \\
 & (d_1 \boxtimes d_2) \kcomp \pulb{c_1,c_2} = 
 \pulb{d_1 \kcomp c_1, d_2 \kcomp c_2}
\end{flalign}

\medskip

\noindent
Strict parallelism yields a result dual to Lemma
\ref{inter_sum}.

\begin{Mlemma}
  \label{exch_times}
  For any continuous functions $f: X_1 \rightarrow Y_1, g : X_2
  \rightarrow Y_2$, the following equation holds
  \begin{flalign}
    \corn{f}  \boxtimes \corn{g} = \corn{ f \times g }
  \end{flalign}
\end{Mlemma}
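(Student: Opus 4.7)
The plan is to unfold both sides to a common pointfree expression, using the functoriality of $L$ (laws \ref{idfunc_F}--\ref{func_F}) to reduce everything to compositions of lifted continuous functions. The key observation is that $\corn{h}$ always produces evolutions of duration zero, so the compatibility condition for strict parallelism is trivially satisfied, and the mediating arrow $\gamma$ acts only on the underlying functions.

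First, I would expand the left-hand side via the definition of $\boxtimes$:
\begin{flalign*}
\corn{f} \boxtimes \corn{g} = \pulb{\corn{f} \kcomp \corn{\pi_1},\ \corn{g} \kcomp \corn{\pi_2}}.
\end{flalign*}
By functoriality of $L$ (law \ref{func_F}) the two inner compositions simplify to $\corn{f \comp \pi_1}$ and $\corn{g \comp \pi_2}$, both of which, being in the image of $L$, send every input $(x_1,x_2)$ to an evolution of duration $0$. Hence the compatibility condition required by $\pulb{\cdot,\cdot}$ is met, and the pair lands inside the subspace $E \subseteq \MH Y_1 \times \MH Y_2$ of duration-coincident evolutions.

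Next, I would carry out the pointwise computation. Unfolding $\pulb{c_1,c_2} = \gamma \comp \pv{c_1,c_2}$ on an input $(x_1,x_2)$ gives
\begin{flalign*}
\gamma\bigl((\underline{f(x_1)},0),\ (\underline{g(x_2)},0)\bigr) = \bigl(\pv{\underline{f(x_1)},\underline{g(x_2)}},\ 0\bigr),
\end{flalign*}
and since $\pv{\underline{f(x_1)},\underline{g(x_2)}} = \underline{(f(x_1), g(x_2))} = \underline{(f \times g)(x_1,x_2)}$, the resulting pair is exactly $\eta_{Y_1 \times Y_2}((f \times g)(x_1,x_2))$, which by Definition~\ref{def:unit} and the definition of $L$ is $\corn{f \times g}(x_1,x_2)$.

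There is no real obstacle here; the only point that deserves care is the verification of compatibility (to justify that $\pulb{\cdot,\cdot}$ is actually defined on this input), and the observation that the constant-function isomorphism $\pv{\underline{a},\underline{b}} = \underline{(a,b)}$ is what lets $\gamma$ collapse the pair of trivial evolutions into a single trivial evolution of the product. Everything else is a direct application of the monad unit, functoriality of $L$, and the pointwise definition of $\gamma$ given just before Example~\ref{exam_com}.
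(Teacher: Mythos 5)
Your proof is correct and follows essentially the same route as the paper's: unfold $\boxtimes$, use functoriality of $L$ to reduce to $\pulb{\corn{f\comp\pi_1},\corn{g\comp\pi_2}}$, and then collapse the pair of trivial evolutions via $\gamma$ — the paper merely phrases your pointwise computation as the pointfree identity $\gamma\comp(\eta_{Y_1}\times\eta_{Y_2})=\eta_{Y_1\times Y_2}$. Your explicit check that lifted arrows are always compatible (all durations being $0$) is a welcome addition that the paper leaves implicit.
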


\begin{proof}
    \begin{eqnarray*}
    && \corn{f} \boxtimes \corn{g} 
    \justl{=}{Definition of $\boxtimes$}
    \pulb{ \corn{f} \kcomp \corn{ \pi_1 } , \corn{g} \kcomp \corn{\pi_2} }
    \justl{=}{ $L$ is a functor }
    \pulb{ \corn{ f \comp \pi_1 } , \corn{ g \comp \pi_2} }
    \justl{=}{Definition of $L$, $\times_1$ (in $\klH$) }
    \gamma \comp \langle
    \eta \comp f \comp \pi_1 , \eta \comp g \comp \pi_2 \rangle
    \justl{=}{Universal property of product (in $\mathbf{Top}$)}
    \gamma \comp (\eta \times \eta) \comp 
    \langle f \comp \pi_1 , g \comp \pi_2 \rangle 
    \justl{=}{ $\gamma \comp (\eta_{Y_1} \times \eta_{Y_2}) 
      = \eta_{Y_1 \times Y_2}$,
    definition of $\times$ (in $\topo$) }
    \eta \comp (f \times g) 
    \justl{=}{Definition of $L$}
    \corn{ f \times g }
  \end{eqnarray*}
\end{proof}

\medskip
\noindent
In some cases, however, putting two components in strict parallel may
be too restrictive or not enough to meet the system's design
requirements. The next section introduces a more relaxed version of
parallelism where synchronisation comes into play. Mathematically, our
construction explores the monoidal nature of functor $\MH$.

\section{Synchronised product and feedback}
\label{sec:Sync}

\noindent
Synchronised parallelism is a form of composition in which components
no longer need to be compatible in order to be put in
parallel. Instead, each of them can change the duration of the
corresponding evolutions according to the behaviour of the other. The
price to be paid is that the previous pullback (or any limit in
general) is no longer a suitable formalisation. Actually, adding a
monoidal structure \cite{seal2013} to functor $\MH$, as we will see in
the sequel, seems to be a better alternative.

\begin{Mdefinition}
  We say that functor $\MH$ is monoidal $($with respect to $\times$$)$
  if it comes equipped with a morphism $m : 1 \rightarrow \MH 1$, and
  a natural transformation $\delta : \MH \times \MH \rightarrow \MH$
  that make the following diagrams to commute for any topological
  spaces $X,Y \in |\topo|$.

  \begin{center}
  \begin{tabular}{c}
    \xymatrix {
      (\MH X \times \MH Y) \times \MH Z
      \ar[rr]^{\alpha}
      \ar[d]_{\delta \times id}
      && 
      \MH X \times ( \MH Y \times \MH Z)
      \ar[d]^{id \times \delta} \\
      \MH (X \times Y) \times \MH Z
      \ar[d]_{\delta} && 
      \MH X \times \MH (Y \times Z )
      \ar[d]^{\delta} \\
      \MH ((X \times Y) \times Z)
      \ar[rr]_{\MH \alpha }
      && \MH ( X \times  ( Y \times Z ))
    }
  \end{tabular}
  \end{center}

  \begin{center}
  \begin{tabular}{c c c c}
       \xymatrix{
           \MH X \times 1 \ar[rr]^{id \times m} 
           \ar[d]_{\pi_1}
           && \MH X \times \MH 1 \ar[d]^{\delta} \\
           \MH X && \MH (X \times 1) 
           \ar[ll]^{\MH \pi_1}
     } 
          & & & 
     \xymatrix{
           1 \times
           \MH X \ar[rr]^{m \times id}  \ar[d]_{\pi_2} 
           && \MH 1 \times \MH X \ar[d]^{\delta} \\
           \MH X && \ar[ll]^{\MH \pi_2} 
           \MH (1 \times X)
     } 
   \end{tabular}
   \end{center}

\end{Mdefinition}

\noindent
Hence, functor $\MH$ can be made monoidal once a suitable morphism
$m : 1 \rightarrow \mathcal{H} 1$ and a natural transformation
$\delta : \mathcal{H} \times \mathcal{H} \rightarrow \mathcal{H}$ are
defined.
\begin{Mdefinition}
  Let us define such mappings as
  \begin{flalign*}
    & m \> \eqdef copy \\
    & \delta_{X,Y} \> ((f,d),(g,e)) \> \eqdef (\pv{f,g}, d \curlyvee e)
  \end{flalign*}
  \noindent
  where continuous function
  $\curlyvee : \emph{\Dur} \times \emph{\Dur} \rightarrow \emph{\Dur}$
  is defined as $\curlyvee \> \eqdef \hmul{\pi_1}{(\geq)}{\pi_2}$.
\end{Mdefinition}

\noindent
As a side note, observe that a possible definition of $\delta$ resorts
to the minimum function $\curlywedge$ (instead of $\curlyvee$) but
then the diagrams above would not commute. Indeed, for such an
alternative to work, $m$ would need to be changed into a variant of
function $copy$ whose evolutions are always infinite.

\begin{Mlemma}
  \label{lemdelta}
  $\delta$ is a natural transformation. 

\end{Mlemma}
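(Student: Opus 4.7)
The plan is to check the two requirements in the definition of a natural transformation separately: (i) each component $\delta_{X,Y}$ is a well-defined continuous function into $\MH(X \times Y)$, and (ii) the naturality square commutes for every pair of continuous maps $h_1 : X \to X'$, $h_2 : Y \to Y'$.

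For (i) the pair construction in the first coordinate is nothing but pairing of exponentials, so starting from the inclusion $\iota : \MH X \times \MH Y \hookrightarrow (X^{\Rz} \times \Dur) \times (Y^{\Rz} \times \Dur)$ I would derive continuity of $((f,d),(g,e)) \mapsto \langle f,g\rangle$ by composing projections $X^{\Rz} \times Y^{\Rz} \to (X \times Y)^{\Rz}$ with projections of $\iota$, applying rules $(\times)$, $(\lambda)$ and $(\comp)$ from Figure \ref{fig:cont}. Continuity of the duration component $d \curlyvee e$ follows directly from continuity of $\curlyvee$. What needs a small separate argument is that the resulting pair actually lands in $\MH(X \times Y)$, so that rule $(\downarrow_r)$ applies: I must show $\langle f,g\rangle \comp \curlywedge_{d \curlyvee e} = \langle f,g\rangle$. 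This reduces coordinatewise to $f \comp \curlywedge_{d \curlyvee e} = f$ and $g \comp \curlywedge_{d \curlyvee e} = g$, which follow from $f \comp \curlywedge_d = f$, $g \comp \curlywedge_e = g$ together with $d \curlyvee e \geq d$ and $d \curlyvee e \geq e$ (since $\curlywedge_{d'} = \curlywedge_{d'} \comp \curlywedge_{d}$ whenever $d \leq d'$, after which $f$ is already constant).

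For (ii) I would just chase an arbitrary element. On the one hand,
\begin{flalign*}
\MH(h_1 \times h_2) \comp \delta_{X,Y} \> ((f,d),(g,e))
  &= \MH(h_1 \times h_2) \> (\pv{f,g}, d \curlyvee e) \\
  &= ((h_1 \times h_2) \comp \pv{f,g}, \> d \curlyvee e) \\
  &= (\pv{h_1 \comp f, h_2 \comp g}, \> d \curlyvee e),
\end{flalign*}
while on the other hand,
\begin{flalign*}
\delta_{X',Y'} \comp (\MH h_1 \times \MH h_2) \> ((f,d),(g,e))
  &= \delta_{X',Y'} \> ((h_1 \comp f, d),(h_2 \comp g, e)) \\
  &= (\pv{h_1 \comp f, h_2 \comp g}, \> d \curlyvee e).
\end{flalign*}
The two results coincide, establishing commutativity of the naturality square.

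The main obstacle is the well-definedness check in (i): one has to verify carefully that taking the \emph{maximum} $d \curlyvee e$ of the two durations is exactly what keeps $\pv{f,g}$ inside $\MH(X \times Y)$ — using $\curlywedge$ instead would break this condition, as noted in the remark preceding the lemma, because $f$ need not be constant past the smaller of $d$ and $e$. Once that is in place, naturality is a pure element-chasing calculation.
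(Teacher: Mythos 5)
Your proof is correct and follows essentially the same route as the paper: continuity via the factorisation through the exponential pairing and the continuous map $\curlyvee$, and naturality by the same element chase; you additionally verify the image condition $\pv{f,g}\comp\curlywedge_{d\curlyvee e}=\pv{f,g}$, which the paper's proof leaves implicit. One small slip: the truncation identity you invoke should read $\curlywedge_{d}=\curlywedge_{d'}\comp\curlywedge_{d}=\curlywedge_{d}\comp\curlywedge_{d'}$ for $d\leq d'$ (not $\curlywedge_{d'}=\curlywedge_{d'}\comp\curlywedge_{d}$), from which $f\comp\curlywedge_{d'}=f\comp\curlywedge_{d}\comp\curlywedge_{d'}=f\comp\curlywedge_{d}=f$ as you intend.
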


\begin{proof}
  We know that function $\delta : \MH X \times \MH Y \rightarrow \MH (X \times Y)$
  is defined as,
  \begin{flalign*}
    \MH X \times \MH Y \stackrel{}{\longrightarrow} X^{\Rz} 
    \times Y^{\Rz} \times \Dur \times \Dur
    \stackrel{\cong}{\longrightarrow} (X \times Y)^{\Rz} \times \Dur \times \Dur
    \stackrel{id \times \curlyvee}{\longrightarrow} \MH (X \times Y).
  \end{flalign*}
  \noindent
  Since $\curlyvee : \Dur \times \Dur \rightarrow \Dur$ is
  continuous, $\delta : \MH X \times \MH Y \rightarrow \MH (X \times Y)$
  must be continuous as well. To show that the naturality property holds, we reason
  \begin{eqnarray*}
    && \MH ( a \times b ) \comp \delta \> ((f,d),(g,e)) 
    \justl{=}{ Definition of $\MH$ and $\delta$}
    ( (a \times b) \comp \pv{f,g}, \> d \curlyvee e )
    \justl{=}{ Universal property of product }
    ( \pv{a \comp f, b \comp g}, \> d \curlyvee e )    
    \justl{=}{ Definition of $\delta$ }
    \delta \> ((a \comp f, d), (b \comp g, e))
    \justl{=}{ Definition of $\MH$ }
    \delta \comp (\MH a \times \MH b) \> ((f,d),(g,e))
  \end{eqnarray*}
\end{proof}

\noindent
We can now state the expected result.

\begin{Mtheorem}
  \label{teodelta}
  When equipped with natural transformation $\delta$ and morphism
  $m$, $\MH$ is a monoidal functor.


\end{Mtheorem}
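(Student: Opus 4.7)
The plan is to verify the three coherence diagrams by pointwise computation, relying essentially on (i) associativity of $\langle-,-\rangle$ up to $\alpha$ in $\topo$, and (ii) the fact that $(\Dur,\curlyvee,0)$ is a commutative monoid. Continuity and naturality of $\delta$ were already established in Lemma \ref{lemdelta}, and $m = copy$ is continuous by Definition \ref{def:unit}, so only the commuting conditions remain.

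For the pentagon, I would chase an arbitrary element $((f,d),(g,e),(h,k)) \in (\MH X \times \MH Y) \times \MH Z$. Travelling down-then-right yields
\begin{flalign*}
\MH \alpha \comp \delta \comp (\delta \times id) \> ((f,d),(g,e),(h,k))
= (\alpha \comp \pv{\pv{f,g},h}, \> (d \curlyvee e) \curlyvee k),
\end{flalign*}
while travelling right-then-down produces
\begin{flalign*}
\delta \comp (id \times \delta) \comp \alpha \> ((f,d),(g,e),(h,k))
= (\pv{f,\pv{g,h}}, \> d \curlyvee (e \curlyvee k)).
\end{flalign*}
The functional parts coincide because $\alpha \comp \pv{\pv{f,g},h} = \pv{f,\pv{g,h}}$ is the defining property of the associator of $\times$ in $\topo$; the duration parts coincide because $\curlyvee$ (the maximum on $\Dur$) is associative.

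For the right unit, given $((f,d),\star) \in \MH X \times 1$, one branch reduces to $\pi_1((f,d),\star) = (f,d)$, while the other computes to
\begin{flalign*}
\MH \pi_1 \comp \delta \comp (id \times m) \> ((f,d),\star)
= (\pi_1 \comp \pv{f,\underline{\star}}, \> d \curlyvee 0)
= (f, d),
\end{flalign*}
using that $m(\star) = (\underline{\star},0)$ and that $0$ is the unit of $\curlyvee$ (since every $d \in \Dur$ satisfies $d \geq 0$). The left unit diagram is entirely symmetric, with $\pi_2$ and $\underline{\star}$ on the left slot of the pairing.

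The only mild subtlety worth flagging is well-definedness of $\delta$, \emph{i.e.}\ that $(\pv{f,g}, d \curlyvee e)$ genuinely lies in $\MH(X \times Y)$: since $f \comp \curlywedge_d = f$ and $g \comp \curlywedge_e = g$ with $d, e \leq d \curlyvee e$, both components are constant beyond $d \curlyvee e$, so $\pv{f,g} \comp \curlywedge_{d \curlyvee e} = \pv{f,g}$. I do not anticipate any real obstacle here: the whole proof is a routine verification whose substance reduces to the fact that $(\Dur,\curlyvee,0)$ is a commutative monoid, and the categorical product of $\topo$ is associative and unital in the standard way.
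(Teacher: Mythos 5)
Your proposal is correct and follows essentially the same route as the paper's proof: a pointwise chase of the associativity square using associativity of $\curlyvee$ together with the associator of $\times$ in $\topo$, and of the unit squares using that $0$ is the unit of $\curlyvee$. The extra check that $(\pv{f,g}, d \curlyvee e)$ indeed lands in $\MH(X \times Y)$ is a welcome (and correct) addition that the paper leaves implicit in Lemma \ref{lemdelta}.
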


\begin{proof}
  In appendix.
\end{proof}

\noindent
The monoidal structure $\pv{ \MH, \delta, m}$ defines a specific operator
for synchronised parallelism, which behaves as follows: given two
components with the same domain $c_1 : I \rightarrow \MH A, \> \> \>
c_2 : I \rightarrow \MH B$, define $\delta \comp \pv{c_1,c_2} : I
\rightarrow \MH A \times \MH B \rightarrow \MH (A \times B)$, to be
denoted in sequel by $\psync{c_1,c_2}$.  

System $\psync{c_1,c_2}$ runs $c_1$ and $c_2$ in parallel; however,
if one finishes earlier than the other, it is forced to stall its
evolution so that both components end at the same time.  In other
words, the duration of the shorter evolution is increased by keeping
it constant until the longer evolution
terminates.

Again, this form of parallelism is a lax version of strict
  parallelism, the cost being that many laws that hold before are now
lost.  Nevertheless, the monoidal structure of $\MH$ still makes
straightforward to show the following properties.
    \begin{flalign}
    & \corn{ f \times g } \kcomp \psync{c_1,c_2} = 
    \psync{\corn{f} \kcomp c_1, \corn{ g } 
      \kcomp c_2} \label{weakAbs} \\
    & \corn{ \alpha }
    \kcomp \llparenthesis \psync{c_1,c_2},c_3 \rrparenthesis  =
    \psync{c_1, \psync{c_2,c_3}}   \label{H_monoid} \\
    & \corn{ \pi_1 } \kcomp \psync{c, copy} = c   \label{H_unit} \\
    & \corn{ \pi_2 } \kcomp \psync{copy, c} = c   \label{H_unit2}
  \end{flalign}





\noindent
Moreover, we are able to canonically define a new operator, following
a path similar to the one used to define $\boxplus$ and $\boxtimes$.
\begin{Mdefinition}
  Given systems $c_1 : I_1 \karrow O_1$, $c_2 : I_2 \karrow O_2$,
  component $c_1 \> \boxed{s} \> c_2 : I_1 \times I_2 \karrow
  O_1 \times O_2$ is defined by
    \begin{flalign*}
    c_1 \> \boxed{s} \> c_2 \> \eqdef \psync{c_1 \kcomp \corn { \pi_1 } \>, 
      c_2 \kcomp \corn{ \pi_2 } \> }
  \end{flalign*}
\end{Mdefinition}
\medskip
\noindent
The following laws arise from routine calculations
\begin{flalign}
  & \corn{sw} \kcomp (c_2 \> \boxed{s} \> c_1) = 
  (c_1 \> \boxed{s} \> c_2) \comp sw \\
  & \corn{\alpha} \kcomp 
  ((c_1 \> \boxed{s} \> c_2) \> \boxed{s} \>  c_3) = 
  (c_1 \> \boxed{s} \> (c_2 \> \boxed{s} \>  c_3)) \comp \alpha \\
  & copy_X \> \boxed{s} \> copy_Y = copy_{X \times Y} \\
  & \corn{f}  \> \boxed{s} \> \corn{g} = 
  \corn{f \times g} 
\end{flalign}

\medskip
\noindent
Note that strict and synchronised parallel composition behave
identically but with one exception: in any given execution, the latter
increases the execution time of a system that finishes earlier than
the other. Hence, for compatible components both operators behave
exactly in the same way, and, therefore, the former inherits all laws
derived in this section for the latter.

Next, we introduce iteration for continuous systems. This facilitates
component specification and, moreover, can be used to express (or
detect) \emph{Zeno} behaviour \cite{alur2015}.
\begin{Mdefinition}
  Given a component $c : X \karrow X$, component $c^n : X
  \karrow X$ is defined by the $($Kleisli$)$ composition of
  $c$ with itself $n$ times. Formally,
  \begin{flalign*}
    & c^0 \> \eqdef copy, \> \>  \>
    c^n \> \eqdef c^{n - 1} \kcomp c
  \end{flalign*}
\end{Mdefinition}

\noindent
It is straightforward to check that the following equations
hold.
\begin{flalign}
  & copy^n = copy \\
  & c^1 = c \\
  & (c^n)^m = c^{n \times m} \\
  & c^n \kcomp c^m = c^{n + m} \\
  & (c \boxplus d)^n = c^n \boxplus d^n \\
  & (c \boxtimes d)^n = c^n \boxtimes d^n
\end{flalign}

\medskip
\noindent
Infinite iteration leads to the familiar notion of \emph{feedback}.
\begin{Mdefinition}
  Let $(X,d)$ be a complete metric space, and $c : X \karrow X$ a
  pre-dynamical system; denote the series $(\pi_2 \comp c^i (x))_{i
    \in \mathbb{N}}$ by $(s_i)_{i \in \mathbb{N}}$, and the sequence
  $(\pi_1 \comp c^i (x))_{i \in \mathbb{N}}$ by $(f_i)_{i \in
    \mathbb{N}}$.

  Then, assume that for any $x \in X$ whenever the series 
  $(s_i)_{i \in \mathbb{N}}$ converges the sequence 
  $(f_i)_{i \in \mathbb{N}}$ is Cauchy. More concretely, its
  elements get progressively closer to each other with respect
  to the metric,
  \begin{flalign*}
   d^*(g,h) \> \> \eqdef \sup_{t \in \emph{\Rz}} \> d(g(t), h(t)) .
  \end{flalign*}

  \noindent
  The interested reader will find in \cite{Kelley} more details about this
  metric.

  Finally, define infinite iteration $($$X \stackrel{c}{\karrow} X
  \stackrel{c}{\karrow} X \stackrel{c}{\karrow} \dots$$)$ as $\nu c :
  X \karrow X$ where
  \begin{flalign*}
    \pi_2 \comp \nu c \> (x) \> \eqdef 
    \begin{cases}
      \infty & \mbox{ if the series } (s_i)_{i \in \mathbb{N}} 
      \mbox{ diverges } \\
      \lim_{i \rightarrow \infty} s_i & \mbox{ otherwise }
    \end{cases}  
  \end{flalign*}
  \begin{flalign*}
    \hspace{-0.8cm}
    (\pi_1 \comp \nu c \> (x)) \> t \> \eqdef
    \begin{cases}
      f_k \> t & \mbox{ if } t < (\pi_2 \comp \nu c \> (x)) \\
      \big (\lim_{i \rightarrow \infty} f_i \big ) \> t & \mbox{ otherwise }
    \end{cases}
  \end{flalign*}
  \noindent
  for $k$ the smallest value such that $t \leq s_k$.
\end{Mdefinition}

\noindent
Intuitively, to compute the value at a certain instant $(t)$ in the
evolution $(\pi_1 \comp \nu c \> (x))$, we need to compose $c$ with
itself the necessary number of times for the composite `to reach that
instant'; only then it is possible to extract the value. To be
concrete, if each iteration of $c$ has two seconds of duration, to
calculate the value at five seconds in the evolution $(\pi_1 \comp \nu
c \> (x))$, we consider the composite $c^3$ and compute the expression
$(\pi_1 \comp c^3 \> (x)) \> 5$.

Observe that, since $c$ is pre-dynamical, the calculated value is not
changed by additional iterations, \ie
\begin{flalign*}
 (\pi_1 \comp c^k \> (x)) \> t = (\pi_1 \comp (c \kcomp c^k) \> (x)) \> t .
\end{flalign*}

\noindent
Actually, in the definition above one may forget the assumption of
$c$ being pre-dynamical as long as it is ensured that the sequence
$(f_i)_{i \in \mathbb{N}}$ is always Cauchy.

The following section gives concrete examples of parallel operators
and (infinite) iteration at work. The role of feedback in handling
Zeno behaviour is illustrated as well.

\section{From continuous to hybrid systems}
\label{sec:Str}

\noindent
Having characterised a calculus of continuous components based on the
structure of the Kleisli category of monad $\MH$, the next step is to
broaden the picture in order to handle systems that exhibit continuous
and discrete behaviour intertwined. Such is the purpose of this
section.  A number of examples will illustrate the approach proposed
here as well as some of the operators introduced in the previous
sections.  

Our aim is to  equip continuous systems with an
(internal) state space that behaves in a discrete manner. Therefore, arrows
become typed as
\begin{flalign*}
  & S \times I \longrightarrow S \times \MH O.
\end{flalign*}
\noindent
Intuitively, given a state ($s \in S$) and an input ($i \in
I$), the component transits (internally) into another state and
presents continuous evolutions that can be directly observed. This
gets us closer to the notion of hybrid system, as a family of
continuous systems indexed by a state space. On the other hand, this
approach is aligned with the notion of components as coalgebras (as
described in \cite{Barbosa03}).
Actually, our aim is to characterise
hybrid systems as coalgebras with a discrete (internal) behaviour, and
(external) continuous evolutions.

The cornerstone of this move from continuous to hybrid components is
the notion of tensorial strength for monad $\MH$: a natural
transformation $\tau : Id \times \MH \rightarrow \MH(Id \times Id)$
that commutes with the monad operations and with specific monoidal
structure of the base category (see the formal definition in
\cite{kock:1972}). Indeed, tensorial strength allows us to transport
such systems to $\klH$, via composition:
\begin{center}
  $\infer{\tau \comp c : S \times I \rightarrow \MH (S \times O)}
  {c : S \times I \rightarrow S \times \MH O}$
\end{center}

\begin{Mdefinition}
  Given topological spaces $X,Y \in |\topo|$ a $($right$)$ tensorial
  strength of monad $\MH$ is the function $\tau_{X,Y} : X \times \MH Y
  \rightarrow \MH (X \times Y)$ defined by
  \begin{flalign*}
    & \tau_{X,Y} \> (x,(f,d)) \> \eqdef  ( \pv{\const{x},f}, d).
  \end{flalign*}
\end{Mdefinition}

\noindent
Interestingly, function $\tau$ corresponds to the \emph{uniform characterisation} of
tensorial strength for monads over $\mathbf{Set}$ (\emph{cf.}
\cite{introcoalg}). This entails that all diagrams that need to commute
do commute, and therefore we just need to show that
$\tau$ is continuous. For this, observe that $\tau$ can alternatively be defined  as 
$\pv{\lambda \tau_a, \tau_b} : X \times \MH Y \rightarrow \MH (X \times Y)$ where,
\begin{flalign*}
  & \tau_a \> ((x,(f,d)),t) \> \eqdef (x, f \> t) \\
  & \tau_b \> (x,(f,d)) \> \eqdef d
\end{flalign*}

\noindent
Since $\tau_a,\tau_b$ are continuous, so is $\tau$.

\begin{Mcorollary}
  Natural transformation $\tau : Id \times \MH \rightarrow \MH (Id \times Id)$
  defines a tensorial strength for monad $\MH$.
\end{Mcorollary}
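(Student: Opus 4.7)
My plan is to split the verification into two halves: the algebraic coherence conditions demanded of a tensorial strength (compatibility of $\tau$ with $\eta$, with $\mu$, and with the associator and left-unit isomorphisms of $\times$), and the topological content (continuity of $\tau_{X,Y}$ together with the fact that its image really lands inside the subspace $\MH(X\times Y) \subseteq (X\times Y)^{\Rz}\times \Dur$).

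For the coherence half I would invoke the observation made immediately before the statement: the formula defining $\tau_{X,Y}$ is exactly the uniform formula used for $\mathbf{Set}$-based monads in \cite{introcoalg}. Since the forgetful functor $\topo \to \mathbf{Set}$ is faithful and preserves finite products, each coherence square required of a strength reduces to an equality of underlying set functions that already holds in $\mathbf{Set}$. Thus $\tau \comp (id\times \eta) = \eta$, the associativity square $\mu \comp \MH\tau \comp \tau = \tau \comp (id \times \mu)$, and the two compatibilities with the associator and left unitor of $\times$ all hold automatically, provided each map is known to live in $\topo$ at all.

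This reduces the proof to one continuity verification. For the codomain restriction I would note that if $(f,d)\in \MH Y$ then $f\comp \curlywedge_d = f$, whence $\langle \const{x}, f\rangle \comp \curlywedge_d = \langle \const{x}, f\rangle$, so $\tau_{X,Y}(x,(f,d))$ does sit in $\MH(X\times Y)$. For continuity itself I would follow the factorisation $\tau_{X,Y} = \langle \lambda \tau_a, \tau_b\rangle$ already indicated above the statement: $\tau_b$ is built from projections and is immediately continuous, while $\tau_a$ is built from projections, the constant-function map $\const{(-)}$, the inclusion $\MH Y \hookrightarrow Y^{\Rz}\times \Dur$, and the evaluation $ev : Y^{\Rz}\times \Rz \to Y$ (continuous because $\Rz$ is core-compact), so a chain of $(\comp)$ and $(\times)$ rules delivers its continuity; $(\lambda)$ then yields continuity of $\lambda \tau_a$, another $(\times)$ pairs it with $\tau_b$, and finally $(\downarrow_r)$ corestricts the codomain to $\MH(X\times Y)$. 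The main obstacle is purely bookkeeping in this continuity chain --- remembering to factor through the inclusion $\MH Y \hookrightarrow Y^{\Rz}\times \Dur$ before applying $ev$, and discharging the core-compactness side-condition of $(\lambda)$ with the standard fact that $\mathbb{R}_{\geq 0}$ is locally compact; no serious difficulty arises on the algebraic side, which is inherited wholesale from the $\mathbf{Set}$ calculation.
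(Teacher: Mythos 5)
Your proposal follows essentially the same route as the paper: the coherence diagrams are discharged by appeal to the uniform $\mathbf{Set}$-characterisation of strength (the paper states this just before the corollary), and continuity is obtained from the factorisation $\tau = \pv{\lambda\tau_a,\tau_b}$ using the rules of Figure~\ref{fig:cont}. Your additional bookkeeping --- the check that $\pv{\const{x},f}\comp\curlywedge_d = \pv{\const{x},f}$ so the image lands in $\MH(X\times Y)$, and the explicit reduction via the faithful product-preserving forgetful functor --- only makes explicit what the paper leaves implicit, and is correct.
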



\noindent
Note that one can also define a natural transformation
$\tau_l : \MH \times Id \rightarrow \MH(Id \times Id)$ (known as left
tensorial strength for $\MH$), via the equation
$\tau_l \> \eqdef (\MH sw) \comp \tau \comp sw$.
Moreover, a monad is commutative, if the equation below holds.
\begin{flalign*}
  \tau \kcomp \tau_l = \tau_l \kcomp \tau
\end{flalign*}
\noindent
This is not, however, the case for monad $\MH$, as the following
counter-example reports.

\begin{Mexample}
  Recall the two signal generators, introduced in Example \ref{exam_com}.
  \begin{flalign*}
  & c_1 \> x \> = \>  ( \> x + ( \sin \> \_ \> ), \> 20 \> ), \> \> 
  c_2 \> x \> = \>  ( \> x + \sin \> (3 \times \_ \> ) , \> 20 \> )
  \end{flalign*}
  The application of left and right tensorial strength to the composed
  function $\pv{c_1,c_2} : \mathbb{R} \rightarrow \MH \mathbb{R} \times
  \MH \mathbb{R}$ yields the behaviours depicted below.

  \pgfplotsset{samples=100}
  \begin{center}
    \scalebox{0.7}{  
      \begin{tikzpicture} 
  \begin{axis}[
      title={$\tau \kcomp \tau_l \comp
        \pv{c_1,c_2} \> 0$},
      xlabel=$x$, ylabel=$y$,
      ymin= -3,
      ymax = 3,
      xmin = 0,
      xmax = 40,
      grid = major
  ]
      \addplot[smooth,
      domain = 0:20] { sin(deg(x)) };
      \addplot[smooth,
      domain = 20:40] {sin(deg(20)) };
      \addplot[smooth, blue!60!white,
      domain = 0:20] { 0 };
      \addplot[smooth, blue!60!white,
      domain = 20:40] { sin(3*(deg(x - 20))) };

  \end{axis}
  \end{tikzpicture}
  }  
 \end{center}

  \begin{center}
 \scalebox{0.7}{   
   \begin{tikzpicture} 
    \begin{axis}[
        title={$\tau_l \kcomp \tau \comp
          \pv{c_1,c_2} \> 0$},
        xlabel=$x$, ylabel=$y$,
        ymin= -3,
        ymax = 3,
        xmin = 0,
        xmax = 40,
        grid = major
    ]
        \addplot[smooth,
        domain = 0:20] { 0 };
        \addplot[smooth,
        domain = 20:40] { sin(deg(x - 20)) };
        \addplot[smooth, blue!60!white,
        domain = 20:40] { sin(3*(deg(20))) };
        \addplot[smooth, blue!60!white,
        domain = 0:20] { sin(3*(deg(x))) };

    \end{axis}
    \end{tikzpicture}
  } 
\end{center}

\end{Mexample}

\noindent
Clearly, $\tau \kcomp \tau_l \not = \tau_l \kcomp \tau$; but note that
the plots illustrate an interesting aspect: specification $\tau \kcomp
\tau_l \comp \pv{c_1,c_2}$ reads `first let the component in the left
to act, then the one in the right'; and conversely for $\tau_l \kcomp
\tau \comp \pv{c_1,c_2}$. Moreover, note that each component `waits'
for the other by stalling the corresponding evolution.  This
introduces yet another synchronisation mechanism.

\medskip
\noindent
Equipped with tensorial strength $\tau$, we may now explore two
classical examples of hybrid systems from a component-based
perspective. We start with the \emph{bouncing ball} system.
 
\begin{Mexample}
  Consider a bouncing ball dropped at some positive height and with
  no initial velocity. Due to the gravitational effect, it will
  fall into the ground but then bounce back up,  losing, of course, part
  of its kinetic energy in the process.
\end{Mexample}

\noindent
From this description, one may regard the bouncing ball as a hybrid
component whose (continuous) observable behaviour is the evolution of
its spacial position, whereas the internal memory records velocity,
updated at each bounce. To define such a component we resort to
Newton's equations of motion.
  \begin{flalign*}
    & pos_a \> (v, p, t) = p + v t - \tfrac{1}{2} a t^2, 
    \> \> \>
    vel_a \> (v, t) = v - at
  \end{flalign*}
  from which we can derive the function that, given a positive
  height and a current velocity, returns the time needed to reach the
  ground; formally,
  \begin{flalign*}
    & zpos_a \> (v, p) = \tfrac{\sqrt{2a p + v^2} + v}{a}
  \end{flalign*}
  Let us then define the discrete behaviour of the bouncing ball 
  $b_d : V \times P \rightarrow V$ 
  \begin{flalign*}
    b_d \> (v,p) \> \eqdef vel_g(v, zpos_g(v,p)) \times - 0.5 
  \end{flalign*}
  \noindent
  where $0.5$ is the dampening coefficient. For the continuous
  part $b_c : V \times P \rightarrow \MH P$
  \begin{flalign*}
    b_c \> \eqdef  \pv{pos_g, zpos_g }
  \end{flalign*}
  \noindent
  where $g = 9.8$ (Earth's gravity). The resulting system is a ball
  bouncing on planet Earth, denoted by $b$ and formally defined as $b \>
  \eqdef \tau \comp \pv{b_d,b_c}$.  Assume that the initial state of $b$ is 
  $0$. Then, through the iteration operator, and assuming five as the
  initial position one gets, for instance, the following behaviour.
   \begin{center}
    \scalebox{0.65}{
    \begin{tikzpicture} 
    \begin{axis}[
        title={$b^3 \> 5$},
        xlabel=$x$, ylabel=$y$,
        ymin= 0,
        ymax = 8,
        xmin = 0,
        xmax = 2.53,
        grid = major
    ]
        \addplot [smooth, 
        domain = 0:2] { 5 - (1/2)*9.8*x^2 };
        \addplot[smooth, shift = {(101.0,0.0)},
        domain = 0:2] { 4.945*x - (1/2)*9.8*x^2 };
        \addplot[smooth, shift = {(202.0,0.0)},
        domain = 0:2] { 2.473*x - (1/2)*9.8*x^2 };
    \end{axis}
    \end{tikzpicture}
    }
  \end{center}

  \noindent
  Analogously, we can define a ball
  bouncing in the Moon (here denoted by letter $c$), and compare the
  behaviour of both bouncing balls by putting them in parallel, with
  the same initial state $0$.
  \begin{center}
    \scalebox{0.65}{
  \begin{tikzpicture} 
  \begin{axis}[
      title={$(b^3 \> \boxed{s} \> c^3) \> (5,5)$},
      xlabel=$x$, ylabel=$y$,
      ymin= 0,
      ymax = 8,
      xmin = 0,
      xmax = 6.2,
      grid = major
   ]

      \addplot [smooth, 
      domain = 0:2] { 5 - (1/2)*9.8*x^2 };
      \addplot[smooth, shift = {(101.0,0.0)},
      domain = 0:2] { 4.945*x - (1/2)*9.8*x^2 };
      \addplot[smooth, shift = {(202.0,0.0)},
      domain = 0:2] { 2.473*x - (1/2)*9.8*x^2 };

      \addplot [smooth, color = blue,
      domain = 0:3] { 5 - (1/2)*1.622*x^2 };

      \addplot [smooth, color = blue, 
      domain = 0:3, shift = {(248.0,0.0)} ] 
        { 0 + 2.011*x - (1/2)*1.622*x^2 };

     \addplot [smooth, color = blue, 
      domain = 0:3, shift = {(495.9,0.0)} ] 
        { 0 + 1.0049*x - (1/2)*1.622*x^2 };

  \end{axis}
  \end{tikzpicture}
  }
  \end{center}

\noindent
Note that $(b^3 \> \boxed{s} \> c^3) \not = (b \> \boxed{s} \> c)^3$.
An interesting question to pose is about the durations that components
$\nu b$ and $\nu c$ output. Indeed, the intuition is that durations
are always infinite (since feedback involves infinite sums), however,
due to the Zeno effect, the durations that concern this example are
actually finite: they correspond to the time at which the ball stops
moving. Such durations are given precisely by the computation of
$\pi_2 \comp (\nu b)$ and $\pi_2 \comp (\nu c)$ with respect to a
given input.

\begin{Mexample}
  \emph{Alternating pumping systems} are often used to regulate the water
  level of reservoirs. Consider one that fills two tanks
  alternatively in cycles of ten seconds, which means that some sort
  of internal memory is required $($to remember which was the last tank served$)$.
\end{Mexample}

\noindent
Thus, the discrete part $w_d : S \times L \rightarrow S$ is defined as
\begin{flalign*}
  & w_d \> \eqdef flip \comp \pi_1
\end{flalign*}
\noindent
where $S = \{ \top, \bot \}$ is the discrete state space and $flip$
the function that switches between the elements. Let us assume that the initial
state is $\top$. Then, we define the 
continuous behaviour $w_c : S \times L \rightarrow \MH L$
\begin{flalign*}
  & w_c (s,(l_1,l_2)) \> \eqdef (f_s (l_1,l_2), 10)
\end{flalign*}
\noindent
where $f_{\top} (l_1,l_2) \> \eqdef ((l_1 + \; \_ \;), \; l_2)$ and
$f_{\bot} (l_1,l_2) \> \eqdef (l_1, \; (l_2 + \; \_ \;))$.  As
expected, the pumping system is given by equation $w = \tau \comp
\pv{w_d,w_c}$, which, for input $(0,0)$, yields the following plot.

  \begin{center}
    \scalebox{0.65}{
  \begin{tikzpicture} 
  \begin{axis}[
      title={$w^3 \> (0,0)$},
      xlabel=$x$, ylabel=$y$,
      ymin= 0,
      ymax = 20,
      xmin = 0,
      xmax = 30,
      grid = major
      ]
      \addplot [smooth, 
      domain = 0:10] { x };
      \addplot [smooth, shift = {(100.0,0)},
      domain = 0:10] { 10 };
      \addplot [smooth, shift = {(200.0,0)},
      domain = 0:10] { 10 + x };
      \addplot [smooth, color= blue!60!white,
      domain = 0:10] { 0 };
      \addplot [smooth, color = blue!60!white, shift = {(100.0,0)},
      domain = 0:10] { x };
      \addplot [smooth, color = blue!60!white, shift = {(200.0,0)},
      domain = 0:10] { 10  };
  \end{axis}
  \end{tikzpicture}
  }
 \end{center}

\noindent
On a different note, it is natural to consider that the pump takes some
time to switch from one tank to the other: for illustration purposes
let us assume that time to be ten seconds. To simulate such a delay we
can define a variant of $copy$, denoted by $copy_{10}$, that always
outputs evolutions with duration ten.  Then, again for input $(0,0)$,
system $(copy_{10} \kcomp w)^3$ outputs

 \begin{center}
   \scalebox{0.65}{
  \begin{tikzpicture} 
  \begin{axis}[
      title={$(copy_{10} \kcomp w)^3 \> (0,0)$},
      xlabel=$x$, ylabel=$y$,
      ymin= 0,
      ymax = 20,
      xmin = 0,
      xmax = 60,
      grid = major
      ]
      \addplot [smooth, 
      domain = 0:10] { x };
      \addplot [smooth, shift = {(100.0,0)},
      domain = 0:10] { 10 };
      \addplot [smooth, shift = {(200.0,0)},
      domain = 0:10] { 10 };
      \addplot [smooth, shift = {(300.0,0)},
      domain = 0:10] { 10 };
      \addplot [smooth, shift = {(400.0,0)},
      domain = 0:10] { 10 + x };
      \addplot [smooth, shift = {(500.0,0)},
      domain = 0:10] { 20 };

      \addplot [smooth, color= blue!60!white,
      domain = 0:10] { 0 };
      \addplot [smooth, color = blue!60!white, shift = {(100.0,0)},
      domain = 0:10] { 0 };
      \addplot [smooth, color = blue!60!white, shift = {(200.0,0)},
      domain = 0:10] { x  };
      \addplot [smooth, color = blue!60!white, shift = {(300.0,0)},
      domain = 0:10] { 10  };
      \addplot [smooth, color = blue!60!white, shift = {(400.0,0)},
      domain = 0:10] { 10  };
      \addplot [smooth, color = blue!60!white, shift = {(500.0,0)},
      domain = 0:10] { 10  };
  \end{axis}
  \end{tikzpicture}
  }
 \end{center}

 \noindent
 It is also important to analyse situations in which water flows out. Thus,
 consider a hybrid system $z : 1 \times 1 \rightarrow \MH (1 \times L)$ 
 (with trivial state space  $1$)
 whose continuous part
 \begin{flalign*}
   z_c (\star,\star) \> \eqdef ( \pv{/_2,/_2}, 10)
 \end{flalign*}
 \noindent
 dictates the rate of water flowing out in each tank, here represented by a clock that
 runs at half the normal speed. Then, we specify
 the result of $w$ and $z$ acting together in the same set of
 variables.  For this, we define function $h : (S \times L) \times (1
 \times L) \rightarrow (S \times L) \times (1 \times 1)$ where
 \begin{flalign*}
   h \> ((s,l_1,l_2), (\star,x,y)) \> 
   \eqdef ((s,l_1 \circleddash x, l_2 \circleddash y), (\star,\star))
 \end{flalign*}
 
 \noindent
 Intuitively, function $h$ subtracts water in accordance with the rate
 specified by component $z$.  For input $(0,0)$, system
 $ \corn{ h } \kcomp (w \boxtimes z) $ yields the plot below.

  \begin{center}
    \scalebox{0.65}{
  \begin{tikzpicture} 
  \begin{axis}[
      title={$(\corn{ h } \kcomp (w \boxtimes z))^3 \> (0,0)$},
      xlabel=$x$, ylabel=$y$,
      ymin= 0,
      ymax = 5,
      xmin = 0,
      xmax = 30,
      grid = major
      ]
      \addplot [smooth, 
      domain = 0:10] { x - (x/2) };
      \addplot [smooth, shift = {(100.0,0)},
      domain = 0:10] { 5 - (x/2) };
      \addplot [smooth, shift = {(200.0,0)},
      domain = 0:10] { x - (x/2) };
      \addplot [smooth, color= blue!60!white,
      domain = 0:10] { 0 };
      \addplot [smooth, color = blue!60!white, shift = {(100.0,0)},
      domain = 0:10] { x - (x/2) };
      \addplot [smooth, color = blue!60!white, shift = {(200.0,0)},
      domain = 0:10] { 5 - (x/2)  };
  \end{axis}
  \end{tikzpicture}
  }
  \end{center}

\section{Conclusions and future work}
\label{sec:con}

\noindent
It is well known that software systems are becoming prevalently
intertwined with (continuous) physical processes.  Such an
architecture, however, renders their rigorous design (and analysis) a
difficult challenge that calls for a wide, uniform framework combining
the continuous and discrete sides of Mathematics.

As a first step towards a component-based framework for hybrid
systems, in the spirit of \cite{Barbosa03}, this paper showed how
continuous evolutions can be encoded in the form of a strong
(topological) monad. As discussed in Section 1, to capture specific
behavioural models through monads has been a successful path in
Computer Science: such was the case of nondeterministic behaviour,
and the (discrete) probabilistic one; but occurrences in the
continuous domain also exist. A prime example is the \emph{Giry} monad
\cite{prakash98}, which captures stochastic processes and has been
object of study in a number of papers (\emph{e.g.},
\cite{Doberkat:2009,doberkat07,prakash06,jacobs13}).  Along similar
lines, monad $\MH$ provides a categorial universe for continuous, and
hybrid systems, where the effects of continuity over different forms
of composition can be isolated and suitably studied.

This universe, \ie the Kleisli category $\klH$, offers different forms
of system composition, wiring mechanisms, and synchronisation
techniques. For example, Kleisli composition lets the control of an
evolution to be transferred from one system to the other, but also
allows evolutions to be dynamically modified (as observed in the case
of signal amplification). Such behavioural patterns, as discussed in
Section \ref{sec:mon}, are often found in systems like thermostats,
cruise control systems, and signal generators. But more generally, in
control loop systems -- traditionally comprised of a network of
digital controllers that manage a physical process over time through a
feedback loop architecture. In this case, the controllers, possessing
different functionalities, periodically pass control of the physical
process among themselves.

The underlying categorial framework hinted at
several composition operators (through corresponding universal
constructions), and facilitated the elicitation of several
compositional laws. Throughout the paper, the results achieved were
illustrated with classic examples of hybrid systems, namely a
thermostat, a bouncing ball, and a water tank system.

\subsection{Related work}

\emph{Hybrid automata} \cite{hybridautomata} are the \emph{de facto}
formalism for the specification of hybrid systems.  Roughly speaking,
they are a variant of classic automata that allows variables to
continuously evolve while in a state. This defines the continuous
behaviour of an hybrid system, which is then paired with discrete
actions given by the usual state transitions.  Parallel composition of
hybrid automata proceeds similarly to the classic case, where common
labels act as synchronising events.  Interestingly, in
\cite{sifakis98} Bornot and Sifakis introduced additional
synchronisation mechanisms that make one system wait for the evolution
of the other to end, or, on the contrary, force it to finish
earlier. This seems to be intimately related to whatever monoidal
structure is given to functor $\MH$.

During the last years there were also developments concerning the
addition of new dimensions to hybrid automata: for example,
\cite{pnueli10} shows how to take \emph{reaction times} into
consideration in a compositional setting. In our case, 
we took advantage of dawdler components, like $copy_{10}$, to
introduce such delays.

The `rationale' underlying hybrid automata is powerful, and highly
intuitive, but in some cases lacks expressive power: for example,
those systems in which evolutions can be dynamically changed
by some of the components are very hard to specify.  Moreover, aside
from parallel composition, the authors have no knowledge of deep
developments that concern new compositional operators for hybrid
automata.

The industrial tool \textsc{Simulink}\footnote{
  \url{http://www.mathworks.com/products/simulink} }, on the other
hand, offers a highly expressive component-based language, and is thus
closely related to the framework proposed in this paper. Indeed,
\textsc{Simulink} supports a rich palette of compositional operators,
and computational units. It possesses behavioural patterns that
involve dynamical alteration of evolutions, delays, and
synchronisation. Moreover, the transfer of the control of some
evolution is not hard to define. All this renders \textsc{Simulink} a
very interesting tool.  The cost is the lack of a clear semantics, which
impairs formal analysis and the elicitation of compositional laws --
actually, some recent efforts have been made towards the formal
verification of \textsc{Simulink} models in alternative tools
(\emph{cf.} \cite{zou15,robert14}).  In addition, the components
available are rather limited in what concerns the characterisation of
their internal memory and respective transition dynamics.

It would be interesting to study the embedding of (a subset of)
\textsc{Simulink}'s language into $\klH$. In principle, $\klH$ could
act as a tool complement, providing a basis for the formal analysis of
(critical fragments) of hybrid systems.  We stress, however, that we
do not aim at emulating \textsc{Simulink}, but rather at a suitable
coalgebraic framework for hybrid components, where we consider the
discrete transitions to be internal behaviour, and the continuous
evolutions the observable part. From this point of view,
\textsc{Simulink} is very distant from such a line of work.

There is also a close relation between the work here reported and
P. H{\"o}fner's algebra of hybrid systems \cite{hofner_phdthesis}:
the latter's main operator is used to concatenate
evolutions. Moreover, the algebra possesses secondary operators, like
parallelism and synchronisation, that are equally available in
$\klH$. Our approach, however, and differently from
P. H{\"o}fner's calculus, is structured around a monad that encodes
the notion of continuous evolution; this brings up a number of canonical
constructions and smooths the integration with other behavioural
effects, such as nondeterminism or probabilistic behaviour.

Finally, a few categorial models for hybrid systems have been proposed
along the last two decades. For example, document \cite{sernadas00}
introduces an institution -- in essence, a categorial rendering of a
logic -- for hybrid systems, and provides basic forms of composition
such as free aggregation (\ie, parallelism without interaction) and
interconnection where some attributes and events are shared between
two systems. Around the same time, Jacobs \cite{Jacobs2000} suggested
an object oriented coalgebraic framework where hybrid systems are
regarded as coalgebras equipped with a monoid action: coalgebras
define the discrete transitions, and monoid actions the continuous
evolutions.  Some years later Haghverdi \emph{et. al}
\cite{haghverdi05} explored the connection between a formalisation of
hybrid systems (close to hybrid automata) and open maps.  The
objective was to provide appropriate notions of bisimulation both for
dynamical, and hybrid systems.  Composition mechanisms, however, were
not studied in this context.

\subsection{Future work.}

\noindent
Our next step is the development of a calculus of hybrid components
(as in \cite{Barbosa03}) based on monad $\MH$ and its Kleisli
category. The calculus from \cite{Barbosa03}, in its coalgebraic
spirit, is bisimulation-based, with bisimulation given as the usual
span of simulations \cite{introcoalg}. The framework that this paper
sets, however, offers a promising basis to explore alternative notions
of (bi)simulation for continuous and hybrid systems.  This has points
of contact with the work of Haghverdi \emph{et. al} in
\cite{haghverdi05}; but note that we use coalgebraic machinery, and
follow a component-based perspective, which makes possible to study
the relation between (bi)simulation and (the different) compositional
operators.

A second line of research concerns the development of a taxonomy of
continuous, and hybrid systems living in $\klH$.  Indeed, as Stauner
showed at the beginning of the century in his PhD thesis
\cite{Stauner_01}, topologies are useful to elicit a number of
important properties.  For example, the notion of \emph{robustness}
(prevalent in control theory) becomes simple to formulate:
intuitively, a system is robust if small changes in the input lead to
very similar evolutions. In $\klH$, since each system has a
topological semantic base, one can express how robust it is by varying
the topology in its source object. At one limit, if the topology is
discrete, the system is seen as \emph{chaotic}. At the other end, \ie, 
if the topology is indiscrete, the system must always output the same
evolution.

Actually, the compositional nature that underlies $\klH$ allows us to
reason about the robustness of the system at hands through the
analysis of (the robustness of) its simpler constituents.  One
disadvantage of this approach is that composition in $\klH$ is
\emph{strict}, in the sense that components with different topologies
in the connecting points cannot be composed. For example, it is hard
to put a chaotic component after a robust one. Part of our current
research tries to relax this condition while maintaining stability,
whenever possible.

\subparagraph*{Acknowledgements} 
This work is financed by the ERDF –
European Regional Development Fund through the Operational Programme
for Competitiveness and Internationalisation - COMPETE 2020 Programme
and by National Funds through the Portuguese funding agency, FCT -
Funda\c{c}\~{a}o para a Ci\^{e}ncia e a Tecnologia within project
POCI-01-0145-FEDER-016692.  

The first author is also sponsored by FCT
grant SFRH/BD/52234/2013, and the second one by FCT grant
SFRH/BSAB/113890/2015. 

We are grateful for the interesting discussions
that the first author had with Ichiro Hasuo, Toshiki Kataoka, and
Soichiro Fujii about the characterisation of monad $\MH$.  Finally, we
gratefully acknowledge the anonymous reviewers for their interesting
comments along the revision process.

\bibliographystyle{elsarticle-num} 
\bibliography{biblio}

\begin{thebibliography}{10}
\expandafter\ifx\csname url\endcsname\relax
  \def\url#1{\texttt{#1}}\fi
\expandafter\ifx\csname urlprefix\endcsname\relax\def\urlprefix{URL }\fi
\expandafter\ifx\csname href\endcsname\relax
  \def\href#1#2{#2} \def\path#1{#1}\fi

\bibitem{tabuada09}
P.~Tabuada, Verification and Control of Hybrid Systems - {A} Symbolic Approach,
  Springer, 2009.

\bibitem{alur2015}
R.~Alur, Principles of Cyber-Physical Systems, MIT Press, 2015.

\bibitem{Barbosa03}
L.~S. Barbosa, Towards a calculus of state-based software components, Journal
  of Universal Computer Science 9 (2003) 891--909.

\bibitem{kock:1972}
A.~Kock, Strong functors and monoidal monads, Archiv der Mathematik 23~(1)
  (1972) 113--120.

\bibitem{moggi:1991}
E.~Moggi, Notions of computation and monads, Information and computation 93~(1)
  (1991) 55--92.

\bibitem{wadler:1995}
P.~Wadler, Monads for functional programming, in: J.~Jeuring, E.~Meijer (Eds.),
  Advanced Functional Programming, First International Spring School on
  Advanced Functional Programming Techniques, B{\aa}stad, Sweden, May 24-30,
  1995, Tutorial Text, Vol. 925 of Lecture Notes in Computer Science, Springer,
  1995, pp. 24--52.

\bibitem{Marlow11}
S.~Marlow, R.~Newton, S.~L.~P. Jones, A monad for deterministic parallelism,
  in: K.~Claessen (Ed.), Proceedings of the 4th {ACM} {SIGPLAN} Symposium on
  Haskell, Haskell 2011, Tokyo, Japan, 22 September 2011, {ACM}, 2011, pp.
  71--82.

\bibitem{Doberkat:2009}
E.-E. Doberkat, Stochastic Coalgebraic Logic, Monographs in Theoretical
  Computer Science. An EATCS Series, Springer, 2009.

\bibitem{Hasuo06}
I.~Hasuo, B.~Jacobs, A.~Sokolova, Generic trace theory, Electr. Notes Theor.
  Comput. Sci. 164~(1) (2006) 47--65.

\bibitem{moggi:1989}
E.~Moggi, Computational lambda-calculus and monads, in: Proceedings of the
  Fourth Annual Symposium on Logic in Computer Science {(LICS} '89), Pacific
  Grove, California, USA, June 5-8, 1989, {IEEE} Computer Society, 1989, pp.
  14--23.

\bibitem{BO06}
L.~S. Barbosa, J.~N. Oliveira, Transposing partial components: an exercise on
  coalgebraic refinement, Theor. Comp. Sci. 365~(1-2) (2006) 2--22.

\bibitem{JNO14}
J.~N. Oliveira, Preparing relational algebra for "just good enough" hardware,
  in: P.~H{\"{o}}fner, P.~Jipsen, W.~Kahl, M.~E. M{\"{u}}ller (Eds.),
  Relational and Algebraic Methods in Computer Science - 14th International
  Conference, RAMiCS 2014, Marienstatt, Germany, April 28-May 1, 2014.
  Proceedings, Vol. 8428 of Lecture Notes in Computer Science, Springer, 2014,
  pp. 119--138.

\bibitem{mj95}
E.~Meijer, J.~Jeuring, Merging monads and folds for functional programming, in:
  J.~Jeuring, E.~Meijer (Eds.), International Summer School on Advanced
  Functional Programming, Springer Lect. Notes Comp. Sci. (925), 1995, pp.
  228--266.

\bibitem{par98}
A.~Pardo, Monadic corecursion - definition, fusion laws, and applications,
  Electr. Notes Theor. Comput. Sci. 11 (1998) 105--139.

\bibitem{FMBB09}
J.~F. Ferreira, A.~Mendes, R.~Backhouse, L.~S. Barbosa, Which mathematics for
  the information society?, in: J.~Gibbons, J.~N. Oliveira (Eds.), Inter. Conf.
  on Teaching Formal Methods ({TFM'09}), Springer Lect. Notes Comp. Sci.
  (5846), 2009, pp. 39--56.

\bibitem{oliveira09}
J.~N. Oliveira, Extended static checking by calculation using the pointfree
  transform, in: A.~Bove, L.~S. Barbosa, A.~Pardo, J.~S. Pinto (Eds.), Language
  Engineering and Rigorous Software Development, International LerNet ALFA
  Summer School 2008, Piriapolis, Uruguay, February 24 - March 1, 2008, Revised
  Tutorial Lectures, Vol. 5520 of Lecture Notes in Computer Science, Springer,
  2009, pp. 195--251.

\bibitem{MO12a}
S.~Mu, J.~N. Oliveira, Programming from galois connections, J. Log. Algebr.
  Program. 81~(6) (2012) 680--704.

\bibitem{Ol12a}
J.~Oliveira, Towards a linear algebra of programming, Formal Aspects of
  Computing 24~(4-6) (2012) 433--458.

\bibitem{MO12}
H.~D. Macedo, J.~N. Oliveira, Typing linear algebra: {A} biproduct-oriented
  approach, Sci. Comput. Program. 78~(11) (2013) 2160--2191.

\bibitem{introdyn}
A.~Katok, B.~Hasselblatt,
  \href{http://opac.inria.fr/record=b1077897}{Introduction to the modern theory
  of dynamical systems}, Encyclopedia of mathematics and its applications,
  Cambridge university press, Cambridge, 1996.
\newline\urlprefix\url{http://opac.inria.fr/record=b1077897}

\bibitem{introdyn2}
M.~Brin, G.~Stuck, Introduction to Dynamical Systems, Cambridge University
  Press, 2002.

\bibitem{escardo2001}
M.~Escard{\'o}, R.~Heckmann, Topologies on spaces of continuous functions, in:
  Topology Proceedings, Vol.~26, 2001, pp. 545--564.

\bibitem{ronnie2009}
R.~Brown, Moore hyperrectangles on a space form a strict cubical omega-category
  (2009).
\newblock \href {http://arxiv.org/abs/0909.2212} {\path{arXiv:0909.2212}}.

\bibitem{JGL-topology}
J.~Goubault{-}Larrecq, Non-{H}ausdorff Topology and Domain Theory---Selected
  Topics in Point-Set Topology, Vol.~22 of New Mathematical Monographs,
  Cambridge University Press, 2013.

\bibitem{cats}
J.~Ad{\'{a}}mek, H.~Herrlich, G.~E. Strecker, Abstract and Concrete Categories
  - The Joy of Cats, Dover Publications, 2009.

\bibitem{seal2013}
G.~J. Seal, Tensors, monads and actions, Theory and Applications of Categories
  28~(15) (2013) 403--433.

\bibitem{Kelley}
J.~Kelley, General Topology, Van Nostrand, 1955, reprinted by Springer-Verlag,
  Graduate Texts in Mathematics, 27, 1975.

\bibitem{introcoalg}
B.~Jacobs,
  \href{http://www.cs.ru.nl/B.Jacobs/CLG/JacobsCoalgebraIntro.pdf}{Introduction
  to coalgebra. towards mathematics of states and observations.} (2012).
\newline\urlprefix\url{http://www.cs.ru.nl/B.Jacobs/CLG/JacobsCoalgebraIntro.pdf}

\bibitem{prakash98}
P.~Panangaden, Probabilistic relations, in: School of Computer Science, McGill
  University, Montreal, 1998, pp. 59--74.

\bibitem{doberkat07}
E.-E. Doberkat, Kleisli morphisms and randomized congruences for the giry
  monad, Journal of Pure and Applied Algebra 211~(3) (2007) 638 -- 664.

\bibitem{prakash06}
V.~Danos, J.~Desharnais, F.~Laviolette, P.~Panangaden, Bisimulation and
  cocongruence for probabilistic systems, Inf. Comput. 204~(4) (2006) 503--523.

\bibitem{jacobs13}
B.~Jacobs, Measurable spaces and their effect logic, in: 28th Annual {ACM/IEEE}
  Symposium on Logic in Computer Science, {LICS} 2013, New Orleans, LA, USA,
  June 25-28, 2013, 2013, pp. 83--92.

\bibitem{hybridautomata}
T.~A. Henzinger, The theory of hybrid automata, in: Proceedings, 11th Annual
  {IEEE} Symposium on Logic in Computer Science, New Brunswick, New Jersey,
  USA, July 27-30, 1996, {IEEE} Computer Society, 1996, pp. 278--292.

\bibitem{sifakis98}
S.~Bornot, J.~Sifakis, On the composition of hybrid systems, in: T.~A.
  Henzinger, S.~Sastry (Eds.), Hybrid Systems: Computation and Control, First
  International Workshop, HSCC'98, Berkeley, California, USA, April 13-15,
  1998, Proceedings, Vol. 1386 of Lecture Notes in Computer Science, Springer,
  1998, pp. 49--63.

\bibitem{pnueli10}
W.~Damm, H.~Dierks, J.~Oehlerking, A.~Pnueli, Towards component based design of
  hybrid systems: Safety and stability, in: Z.~Manna, D.~A. Peled (Eds.), Time
  for Verification, Essays in Memory of Amir Pnueli, Vol. 6200 of Lecture Notes
  in Computer Science, Springer, 2010, pp. 96--143.

\bibitem{zou15}
L.~Zou, N.~Zhan, S.~Wang, M.~Fr{\"{a}}nzle, Formal verification of
  simulink/stateflow diagrams, in: B.~Finkbeiner, G.~Pu, L.~Zhang (Eds.),
  Automated Technology for Verification and Analysis - 13th International
  Symposium, {ATVA} 2015, Shanghai, China, October 12-15, 2015, Proceedings,
  Vol. 9364 of Lecture Notes in Computer Science, Springer, 2015, pp. 464--481.

\bibitem{robert14}
R.~Reicherdt, S.~Glesner, Formal verification of discrete-time matlab/simulink
  models using boogie, in: D.~Giannakopoulou, G.~Sala{\"{u}}n (Eds.), Software
  Engineering and Formal Methods - 12th International Conference, {SEFM} 2014,
  Grenoble, France, September 1-5, 2014. Proceedings, Vol. 8702 of Lecture
  Notes in Computer Science, Springer, 2014, pp. 190--204.

\bibitem{hofner_phdthesis}
P.~H{\"o}fner, Algebraic calculi for hybrid systems, Ph.D. thesis, University
  of Augsburg (2009).

\bibitem{sernadas00}
H.~Louren\c{c}o, A.~Sernadas, An institution of hybrid systems, in: D.~Bert,
  C.~Choppy, P.~D. Mosses (Eds.), Recent Trends in Algebraic Development
  Techniques, Vol. 1827 of Lecture Notes in Computer Science, Springer Berlin
  Heidelberg, 2000, pp. 219--236.

\bibitem{Jacobs2000}
B.~Jacobs, Object-oriented hybrid systems of coalgebras plus monoid actions,
  Theoretical Computer Science 239~(1) (2000) 41 -- 95.

\bibitem{haghverdi05}
E.~Haghverdi, P.~Tabuada, G.~J. Pappas, Bisimulation relations for dynamical,
  control, and hybrid systems, Theoretical Computer Science 342~(2-3) (2005)
  229--261.

\bibitem{Stauner_01}
T.~Stauner, Systematic development of hybrid systems, Ph.D. thesis, {TU}
  M{\"{u}}nchen (2001).

\end{thebibliography}

\section*{Appendix}

\noindent {\bf Proof of Lemma \ref{lem:mult_nat}}.  
The proof is
divided in two parts: the first establishes continuity of the
mappings, the second concerns naturality. Consider the mapping $\mu_X
: \MH \MH X \rightarrow \MH X$; we are going to show its
continuity. First we observe that $\mu_X$ can be alternatively defined
as $\pv{\lambda a, b}$ where
  \begin{flalign*}
    & a \> \eqdef \MH \MH X \times \Rz \typ{i \times id} (X^{\Rz
      \times \Rz} \times \Dur)
    \times \Rz \typ{conc} X   \\
    & i \> \eqdef \MH \MH X \typ{\pi_1^{\Rz} \times id}
    (X^{\Rz})^{\Rz} \times \Dur \typ{\cong} X^{\Rz \times \Rz} \times \Dur
  \end{flalign*}
  \noindent
  for $conc \> ((f,d),t) \> \eqdef f \> (t \curlywedge d, t \circleddash d)$. The
  definitions clearly show that $a$ is continuous. For function $b$ we have
  \begin{flalign*}
    & b \> \eqdef \MH \MH X \typ{\pi_2^{\Rz} \times id} \Dur^{\Rz} \times \Dur
    \typ{c} \Dur 
  \end{flalign*}
  \noindent
  where
  $c \> (f,d) \> \eqdef \hmul{(f \> d) + d}{(d \not =
    \infty)}{\infty}$.
  Since the canonical restriction
  $(+) \comp \pv{ev,\pi_2} : \Dur^{\Rz} \times \Rz \rightarrow \Dur$
  of $c$ is continuous we just need to show that the latter is
  continuous at infinity. Actually, this comes for free once proved
  that given any neighbourhood $N \supseteq (x, \infty]$ in $\Dur$ of
  $\infty$ we can find a neighbourhood $V$ in $\Dur^{\Rz} \times \Dur$
  of $(f,\infty)$ such that $c \> (V) \subseteq N$.

  Consider neighbourhood $\Dur^{\Rz} \times (x,
  \infty]$. It is clear that $c \> (\Dur^{\Rz} \times (x,\infty])
  \subseteq (x,\infty] \subseteq N$.
  
  \medskip
  \noindent
  Next we show that $\mu$ is natural, \ie, that for any continuous
  function $h : X \rightarrow Y$ the diagram below commutes. 

   \begin{center}
   \begin{tabular}{c}
     \xymatrix {
     \MH \MH X \ar[r]^{\MH \MH h} \ar[d]_{\mu_X} & \MH \MH Y \ar[d]^{\mu_Y} \\
     \MH X \ar[r]_{\MH h } & \MH Y
     }
   \end{tabular}
 \end{center}

 \noindent
 First we assume that $(f,d) \in \mathcal{H} \mathcal{H} X$ has
 finite duration,
  \begin{eqnarray*}
    && \mu \comp \MH \MH h \> (f,d) 
       \justl{=}{ Definition of $\MH$, $\mu$}
       ( \theta \comp \MH h \comp f, d) \conc (\MH h \comp f \> d)
       \justl{=}{$\theta$ is natural}
       (h \comp \theta \comp f, d) \conc (\MH h \comp f \> d)
       \justl{=}{Definition of $\MH$, composition }
       \MH h \> (\theta \comp f, d) \conc \MH h \> (f \> d)
       \justl{=}{ ($\conc$) is natural}
       \MH h \> ( (\theta \comp f , d) \conc (f \> d) )
       \justl{=}{ Definition of $\MH$, $\mu$ }
       \MH h \comp \mu \> (f,d)
  \end{eqnarray*}

\noindent
The proof for the case in which $(f,d) \in \mathcal{H} \mathcal{H} X$
has infinite duration is analogous to the above.  

\qed

\noindent
{\bf Proof of Theorem \ref{theo:monad}}.
  We have to show that the following diagrams commute.
   \begin{center}
     \begin{tabular}{c c c}
         \xymatrix{
         \MH \ar[r]^{\eta_{\MH}}
         \ar[dr]_{1_{\MH}} & \MH^2  \ar[d]_\mu & 
         \MH \ar[l]_{\MH \eta} 
             \ar[dl]^{1_{\MH}} \\
         & \MH & 
         } 
          & & 
          \xymatrix{
           \MH^3 \ar[r]^{\mu_{\MH}} 
           \ar[d]_{\MH {\mu}} & 
           \MH^2 \ar[d]^{{\mu}} \\
           \MH^2 \ar[r]_\mu & \MH
           } 
       \end{tabular}
  \end{center}
  Note that the proof below becomes much more simpler if the evolutions
  involved have infinite duration.

  Let us start with the left triangle.

  \begin{eqnarray*}
    && \mu \comp \eta \> (f,d)
    \justl{=}{Definition of $\eta$}
    \mu \> ( \underline{(f,d)}, 0)
    \justl{=}{Definition of $\mu$}
    (\theta \comp \underline{(f,d)},0) \conc
    (\underline{(f,d)} \> 0)
    \justl{=}{Definition of constant}
    (\theta \comp \underline{(f,d)},0) \conc
    (f,d)
    \justl{=}{ Definition of $\conc$, definition of constant }
    (f,d)
  \end{eqnarray*}

  \noindent
  For the right triangle we have,
  \begin{eqnarray*}
    && \mu \comp \MH \eta \> (f,d)
    \justl{=}{ Definition of $\MH$}
    \mu \> (\eta \comp f, d)
    \justl{=}{ Definition of $\mu$}
    (\theta \comp \eta \comp f, d) \conc
    (\eta \comp f \> \> d)
    \justl{=}{ Definition of $\eta$ }
    (\theta \comp \eta \comp f, d) \conc
    (\underline{f \> d},  0)  
    \justl{=}{ Definition of $\conc$ }
    (\theta \comp \eta \comp f, d) 
    \justl{=}{ Eilenberg-Moore }
    (f,d)
  \end{eqnarray*}

  \noindent
  It remains to show that the square commutes. Before giving the
  formal proof, we present the corresponding intuition from a
  geometric perspective. 

  Let us then start
  by observing that an element in $\MH \MH X$, 
  may be intuitively seen as a square, where each column is
  a function in $\MH X$. Then, note that multiplication ($\mu : 
  \MH \MH X \rightarrow \MH X$) 
  keeps just the first row and
  last column of the square, as illustrated below.

    \begin{center}
      \scalebox{0.8}{
        \begin{tikzpicture} 
           \draw[color=gray, dashed] (0,1.5) -- (1.5,1.5);
           \draw[->] (0,0) -- (2.2,0) node[right] {$y$};
           \draw[->] (0,0) -- (0,2.2) node[above] {$z$};
           \draw[-, thick] (0,0) -- (1.5,0);
           \draw[-, thick] (1.5,0) -- (1.5,1.5);
       \end{tikzpicture}
       }
   \end{center}

  \noindent
  As expected, the intuitive picture of an element in $\MH^3 X$
  is a cube,

    \begin{center}
      \scalebox{0.8}{
        \begin{tikzpicture} 

           \draw[->] (1.2,0) -- (2.2,0) node[right] {$x$};
           \draw[->] (0,1.2) -- (0,2.2) node[above] {$z$};
           \draw[->] (-0.4,-0.4) -- (-1,-1) node [right] {$y$};

           \draw[thin,dotted] (0,0) -- (1.2, 0);
           \draw[thin,dotted] (0,0) -- (0, 1.2);
           \draw[thin] (0,1.2) -- (1.2, 1.2);
           \draw[thin] (1.2,0) -- (1.2, 1.2);
           \draw[thin] (-0.4,-0.4) rectangle (0.8,0.8);

           \fill [fill=blue!20!white, opacity = 0.5] (-0.4,-0.4) 
           rectangle (0.8,0.8);
           
           \path [fill=blue!20!white, opacity = 0.5] (0.8, -0.4) -- (1.2,0)
           -- (1.2, 1.2) -- (0.8, 0.8);

           \path [fill=blue!20!white, opacity = 0.5] (-0.4, 0.8) -- (0, 1.2)
           -- (1.2, 1.2) -- (0.8,0.8);

           \draw[thin] (0.8, -0.4) -- (1.2,0);
           \draw[thin] (0.8, 0.8) -- (1.2,1.2);

           \draw[thin,dotted] (-0.4,-0.4) -- (0,0);  
           \draw[thin] (-0.4, 0.8) -- (0, 1.2);

       \end{tikzpicture}
       }
   \end{center}

   \noindent
   such that a projection on the $x$-axis yields an element of
   $\MH \MH X$ (geometrically, a square as described above).

   Let us now observe that, resorting to multiplication, we can reduce
   the cube into a square.  Actually, we can do this in two different
   ways: via $\mu : \MH^3 X \rightarrow \MH \MH X$, or
   $\MH \mu : \MH^3 X \rightarrow \MH \MH X$. In the former case, only
   the front and right surfaces are kept (picture below in the
   left). In contrast, function $\MH \mu$ applies $\mu$ to
   each projection on the $x$-axis, and thus only the bottom and back
   surfaces are kept (picture below in the right).

   \begin{multicols}{2}
      
     \begin{center}
      \scalebox{0.8}{
        \begin{tikzpicture} 

           \draw[->] (1.2,0) -- (2.2,0) node[right] {$x$};
           \draw[->] (0,0.8) -- (0,2.2) node[above] {$z$};
           \draw[->] (-0.4,-0.4) -- (-1,-1) node [right] {$y$};

           \draw[color = gray, thin, dotted] (0,0) -- (1.2, 0);
           \draw[color = gray, thin, dotted] (0,0) -- (0, 0.8);
           \draw[color = gray, thin, dashed] (0,1.2) -- (1.2, 1.2);
           \draw[thin] (1.2,0) -- (1.2, 1.2);
           \draw[color = black, thin] (-0.4,-0.4) rectangle (0.8,0.8);

           \fill [fill=blue!20!white, opacity = 0.5] (-0.4,-0.4) 
           rectangle (0.8,0.8);

           \path [fill=blue!20!white, opacity = 0.5] (0.8, -0.4) --
           (1.2,0) -- (1.2, 1.2) -- (0.8, 0.8);

           \draw[thin] (0.8, -0.4) -- (1.2,0);
           \draw[thin] (0.8, 0.8) -- (1.2,1.2);

           \draw[color = gray, thin,dotted] (-0.4,-0.4) -- (0,0);  
           \draw[color = gray, thin,dashed] (-0.4, 0.8) -- (0, 1.2);

       \end{tikzpicture}
       }
     \end{center}
     
     \columnbreak

      \begin{center}
      \scalebox{0.8}{
        \begin{tikzpicture} 

           \draw[->] (1.2,0) -- (2.2,0) node[right] {$x$};
           \draw[->] (0,1.2) -- (0,2.2) node[above] {$z$};
           \draw[->] (-0.4,-0.4) -- (-1,-1) node [right] {$y$};

           \draw[thin] (0,0) -- (1.2, 0);
           \draw[thin] (0,1.2) -- (1.2, 1.2);
           \draw[thin] (1.2,0) -- (1.2, 1.2);

           \draw [thin] (-0.4,-0.4) -- (0.8, -0.4);
           \draw [color = gray, thin, dashed] (0.8,-0.4) -- (0.8, 0.8);
           \draw [color = gray, thin, dashed] (-0.4,0.8) -- (0.8, 0.8);
           \draw [color = gray, thin, dashed] (-0.4,-0.4) -- (-0.4, 0.8);
           \draw[thin] (0.8, -0.4) -- (1.2,0);
           \draw[color = gray, thin, dashed] (0.8, 0.8) -- (1.2,1.2);

           \fill [fill=blue!20!white, opacity = 0.5] (0,0) 
           rectangle (1.2,1.2);

           \path [fill=blue!20!white, opacity = 0.5] (-0.4, -0.4) --
           (0.8,-0.4) -- (1.2, 0) -- (0, 0);

           \draw[color = gray, thin,dashed] (-0.4, 0.8) -- (0, 1.2);

            \draw[] (-0.4,-0.4) -- (0,0);  
            \draw[] (0,0) -- (0, 1.2);
       \end{tikzpicture}
       }
   \end{center}

   \end{multicols}

   \noindent
   Finally, applying $\mu : \MH \MH X \rightarrow \MH X$ to the
   resulting squares yields the same result,

         \begin{center}
      \scalebox{0.8}{
        \begin{tikzpicture} 

           \draw[->] (1.2,0) -- (2.2,0) node[right] {$x$};
           \draw[->] (0,1.2) -- (0,2.2) node[above] {$z$};
           \draw[->] (-0.4,-0.4) -- (-1,-1) node [right] {$y$};

           \draw[thin] (0,0) -- (1.2, 0);
           \draw[thin] (0,0) -- (0, 1.2);
           \draw[color = gray, thin, dashed] (0,1.2) -- (1.2, 1.2);
           \draw[thick] (1.2,0) -- (1.2, 1.2);

           \draw [thick] (-0.4,-0.4) -- (0.8, -0.4);
           \draw [color = gray, thin, dashed] (0.8,-0.4) -- (0.8, 0.8);
           \draw [color = gray, thin, dashed] (-0.4,0.8) -- (0.8, 0.8);
           \draw [color = gray, thin, dashed] (-0.4,-0.4) -- (-0.4, 0.8);
           \draw[thick] (0.8, -0.4) -- (1.2,0);
           \draw[color = gray, thin, dashed] (0.8, 0.8) -- (1.2,1.2);

           \draw[thin] (-0.4,-0.4) -- (0,0);  
           \draw[color = gray, thin,dashed] (-0.4, 0.8) -- (0, 1.2);

       \end{tikzpicture}
       }
   \end{center}

  More formally, we reason
  \begin{eqnarray*}
    && 
       \mu \comp \MH \mu \> (f,d)
       \justl{=}{Definition of $\MH$}
       \mu \> (\mu \comp f, d)
       \justl{=}{Definition of $\mu$}
       (\theta \comp \mu \comp f, d)  \conc (\mu \comp f \> d)
       \justl{=}{Eilenberg-Moore}
       (\theta \comp \MH \theta \comp f, d) \conc (\mu \comp f \> d)
       \justl{=}{ Let $f \> d = (f',d')$, definition of $\conc$ }
       \big ( \> (\theta \comp \MH \theta \comp f, d \>)  
       \conc  ( \theta \comp f',d') \> \big ) \conc ( f' \> d')
       \justl{=}{ $\theta$ is natural }
       \big ( \> (\theta \comp \theta \comp f, d \>)  
       \conc  ( \theta \comp f',d') \> \big ) \conc ( f' \> d')
       \justl{=}{ Notation ($f \> d$), definition of $\conc$, definition of $\MH$ }
       \big ( \> \MH \theta \> ( (\theta \comp f, d \>)  
       \conc  (f \> d) ) \> \big ) \conc ( f' \> d')       
       \justl{=}{ Definition of $\mu$}
       \big ( \> \MH \theta \comp \mu \> (f,d)  \> \big ) \conc ( f' \> d')              
       \justl{=}{ Definition of $\mu$ }
       ( \> \MH \theta \comp \mu \> (f,d) \> ) \conc 
       (\> \pi_1 \comp \mu \> (f,d) \> \> \pi_2 \comp \mu \> (f,d) \>)
       \justl{=}{Definition of $\MH$}
       ( \theta \comp \pi_1 \comp \mu \> (f,d), \pi_2 \comp \mu \> (f,d) )
       \conc (\pi_1 \comp \mu \> (f,d) \> \> \pi_2 \comp \mu \> (f,d))
       \justl{=}{Definition of $\mu$ }
       \mu \> ( \mu \> (f,d) )
       \justl{=}{Composition}
       \mu \comp \mu \> (f,d)
  \end{eqnarray*}

\qed

\noindent
{\bf Proof of Theorem \ref{theo_pull}}.
  Consider the following pullback in $\topo$ 

    \begin{center}
      \begin{tabular}{c}
        \xymatrix {
          A \times_C B \ar[r]^(.6){\pi_2} \ar[d]_{\pi_1} \pullbackcorner & B \ar[d]^{g} \\
          A \ar[r]_{f} & C
        }
      \end{tabular}
    \end{center}

  \noindent
  where $f$ and $g$ are arbitrary continuous functions. We need to show 
  that 
      \begin{center}
      \begin{tabular}{c}
        \xymatrix {
          \MH (A \times_C B) \ar[r]^(.6){\MH \pi_2} \ar[d]_{\MH \pi_1} 
          \pullbackcorner & \MH B 
          \ar[d]^{\MH g} \\
          \MH A \ar[r]_{\MH f} & \MH C
        }
      \end{tabular}
    \end{center}
  \noindent
  also forms a pullback in $\topo$. 

  For this, observe that functor $\MH$ comes from the composition of
  functors $( \> \_ \> )^{\Rz}$, and $( \> \_ \> \times \Dur)$, both of which
  preserve pullbacks. Indeed, they give rise to the commuting diagram
  \begin{center}
  \begin{tabular}{c}
    \xymatrix {
      X \ar@{.>}[dr]^{\gamma \comp \langle {c_1,c_2} \rangle }
      \ar@{->}@/_2pc/[ddr]_{c_1}
      \ar@{->}@/^2pc/[drrr]^{c_2}
      & & &  \\
      & (A \times_C B)^{\Rz} \times \Dur \ar@{->}[rr]^{ {\pi_2}^{\Rz} \times id }
      \ar@{->}[d]_{ {\pi_1}^{\Rz} \times id }
      & & B^{\Rz} \times \Dur \ar@{->}[d]^{ {f}^{\Rz} \times id } \\
      & A^{\Rz} \times \Dur \ar@{->}[rr]_{ {g}^{\Rz} \times id } & & C^{\Rz} 
      \times \Dur
    }
  \end{tabular}
  \end{center}
  \noindent
  where $\gamma\> ((e_1,d),(e_2,d)) = (\pv{e_1,e_2},d)$. Let us denote
  $\gamma \comp \pv{c_1,c_2}$ by $\pulb{c_1,c_2}$. 

  Since functor $\MH$ forces specific conditions on
    evolutions (recall that $(e,d) \in \MH X$ implies $e \comp
    \curlywedge_d = e$) some work remains to be done. In fact,
    we need to show that $\img \> \pulb{c_1,c_2} \subseteq \MH (A
    \times_C B)$ whenever $\img \> c_1 \subseteq \MH A$, $\img \> c_2
    \subseteq \MH B$, and $c_1,c_2$ make the outer square to commute.
    In other words, we need to show that, under these conditions,
    $\pulb{c_1,c_2}$ factors through $\iota : \MH (A \times_C B)
    \hookrightarrow (A \times_C B)^{\Rz} \times \Dur$;
    diagrammatically,
  \begin{center}
  \begin{tabular}{c}
    \xymatrix {
       X \ar[r]^(0.3){  \pulb{c_1,c_2} { }  }
         \ar@{.>}[dr] & (A \times_C B)^{\Rz} \times \Dur \\
       & \MH (A \times_C B) \ar@{_{(}->}[u]_{\iota}
     }
  \end{tabular}
  \end{center} 

  \noindent
  Consider an element $x \in X$, and denote $\pulb{c_1,c_2} \> x$ by
  $(\pv{e_1, e_2}, d)$. Since by assumption $e_1 \comp \curlywedge_d = e_1$,
  $e_2 \comp \curlywedge_d = e_2$, it is clear that
  $\pv{e_1,e_2} \comp \curlywedge_d = \pv{e_1,e_2}$ and therefore
  $(\pv{e_1, e_2}, d) \in \MH (A \times_C B)$. 

  \qed

\medskip
\noindent
{\bf Proof of Theorem \ref{teodelta}}. We need to show that the following diagrams
commute.

\begin{center}
  \begin{tabular}{c}
    \xymatrix {
      (\MH X \times \MH Y) \times \MH Z
      \ar[rr]^{\alpha}
      \ar[d]_{\delta \times id}
      && 
      \MH X \times ( \MH Y \times \MH Z)
      \ar[d]^{id \times \delta} \\
      \MH (X \times Y) \times \MH Z
      \ar[d]_{\delta} && 
      \MH X \times \MH (Y \times Z )
      \ar[d]^{\delta} \\
      \MH ((X \times Y) \times Z)
      \ar[rr]_{\MH \alpha }
      && \MH ( X \times  ( Y \times Z ))
    }
  \end{tabular}
  \end{center}

  \begin{center}
  \begin{tabular}{c c c c}
       \xymatrix{
           \MH X \times 1 \ar[rr]^{id \times m} 
           \ar[d]_{\pi_1}
           && \MH X \times \MH 1 \ar[d]^{\delta} \\
           \MH X && \MH (X \times 1) 
           \ar[ll]^{\MH \pi_1}
     } 
          & & & 
     \xymatrix{
           1 \times
           \MH X \ar[rr]^{m \times id}  \ar[d]_{\pi_2} 
           && \MH 1 \times \MH X \ar[d]^{\delta} \\
           \MH X && \ar[ll]^{\MH \pi_2} 
           \MH (1 \times X)
     } 
   \end{tabular}
   \end{center}

  We start with the upper square.

  \begin{eqnarray*}
    && \MH \alpha \comp \delta \comp (\delta \times id) \> \>
    \big ( \> ((e_1,d_1),(e_2,d_2)),(e_3,d_3) \> \big)
    \justl{=}{Definition of $\delta$ and $\MH$ }
    ( \> \alpha \comp \pv{\pv{e_1,e_2},e_3}, ((d_1 \curlyvee d_2) \curlyvee d_3) \>)
    \justl{=}{Definition of product, $\curlyvee$ is associative }
    (\> \pv{e_1,\pv{e_2,e_3}}, (d_1 \curlyvee ( d_2 \curlyvee d_3 )) \>)
    \justl{=}{Definition of $\delta$}
    \delta \> ( \> (e_1,d_1), (\pv{e_2,e_3}, d_2 \curlyvee d_3) \>)
    \justl{=}{Definition of $id \times \delta$ }
    \delta \comp (id \times \delta) \> \> \big ( \> (e_1,d_1), ((e_2,d_2),(e_3,d_3)) \> \big )
    \justl{=}{Definition of $\alpha$}
    \delta \times (id \times \delta) \comp \alpha \> \>
    \big ( \> ((e_1,d_1), (e_2,d_2)),(e_3,d_3) \> \big )
  \end{eqnarray*}

  \medskip
  \noindent
  Then, for the diagram above in the left we reason, and proceed similarly
  with the one in the right.
  \begin{eqnarray*}
    && \MH \pi_1 \comp \delta \comp (id \times m) \> ((f,d), \star)
    \justl{=}{ Definition of $m$, $\delta$, and $\MH$ }
    ( \pi_1 \comp \pv{f, \const{\star}}, d \curlyvee 0 )
    \justl{=}{ Cancellation $\times$, $0$ is the identity element 
      (for $\curlyvee$) }
    (f,d)
    \justl{=}{ Definition of $\pi_1$ }
    \pi_1 \> ((f,d), \star)
  \end{eqnarray*}

\qed


\end{document}